\pgfplotsset{compat=1.16}
	\providecommand\BibTeX{{%
			\normalfont B\kern-0.5em{\scshape i\kern-0.25em b}\kern-0.8em\TeX}}}
\newcommand{\ie}{{i.e.,}\xspace}
\newcommand{\eg}{{e.g.,}\xspace}
\newcommand{\spara}[1]{\vspace{1mm}\noindent\textbf{#1.}}
\newcommand{\savespace}[1]{\ignorespaces}
\newcommand{\gtro}{\textsf{NIGCN}\xspace}
\newcommand{\GCN}{\textsf{GCN}\xspace}
\newcommand{\SGC}{\textsf{SGC}\xspace}
\newcommand{\GAT}{\textsf{GAT}\xspace}
\newcommand{\GDC}{\textsf{GDC}\xspace}
\newcommand{\APPNP}{\textsf{APPNP}\xspace}
\newcommand{\PPRGo}{\textsf{PPRGo}\xspace}
\newcommand{\GraphSAGE}{\textsf{GraphSAGE}\xspace}
\newcommand{\GraphSAINT}{\textsf{GraphSAINT}\xspace}
\newcommand{\FastGCN}{\textsf{FastGCN}\xspace}
\newcommand{\GBP}{\textsf{GBP}\xspace}
\newcommand{\AGP}{\textsf{AGP}\xspace}
\newcommand{\NDLS}{\textsf{NDLS}\xspace}
\newcommand{\ShaDowGCN}{\textsf{ShaDow-GCN}\xspace}
\newcommand{\Grandp}{\textsf{GRAND\({+}\)}\xspace}
\newcommand{\UDF}{\textsf{UDF}\xspace}
\newcommand{\NDM}{\textsf{NDM}\xspace}
\newcommand{\GHD}{\textsf{GHD}\xspace}
\newcommand{\A}{\mathbf{A}\xspace}
\newcommand{\D}{\mathbf{D}\xspace}
\newcommand{\G}{\mathbf{G}\xspace}
\newcommand{\I}{\mathbf{I}\xspace}
\renewcommand{\P}{\mathbf{P}\xspace}
\newcommand{\U}{\mathbf{U}\xspace}
\newcommand{\X}{\mathbf{X}\xspace}
\newcommand{\Z}{\mathbf{Z}\xspace}
\renewcommand{\H}{\mathbf{H}\xspace}
\renewcommand{\L}{\mathbf{L}\xspace}
\newcommand{\TA}{\tilde{\A}\xspace}
\newcommand{\1}{\mathbf{1}\xspace}
\newcommand{\x}{\mathbf{x}\xspace}
\newcommand{\z}{\mathbf{z}\xspace}
\newcommand{\T}{\mathcal{T}\xspace}
\newcommand{\R}{\mathbb{R}\xspace}
\newcommand{\N}{\mathcal{N}\xspace}
\renewcommand{\S}{\mathcal{S}\xspace}
\newcommand{\V}{\mathcal{V}\xspace}
\newcommand{\E}{\mathcal{E}\xspace}
\renewcommand{\d}{{\ensuremath{\mathrm{d}}}}
\newcommand{\e}{{\ensuremath{\mathrm{e}}}}
\newtheorem{property}{Property}[section]
\newenvironment{customlegend}[1][]{%
    \begingroup
    \csname pgfplots@init@cleared@structures\endcsname
    \pgfplotsset{#1}%
}{%
    \csname pgfplots@createlegend\endcsname
    \endgroup
}%
\def\addlegendimage{\csname pgfplots@addlegendimage\endcsname}
\pgfplotsset{every tick label/.append style={font=\tiny}}
\definecolor{myblue}{RGB}{113,210,242}
\definecolor{myorange}{RGB}{249,205,173}
\definecolor{myred}{RGB}{183,0,162}
\definecolor{mygreen}{RGB}{0,191,183}
\definecolor{myredd}{RGB}{190,0,66}
\begin{document}
	
	\title{Node-wise Diffusion for Scalable Graph Learning}
	
	\author{Keke Huang}
	\email{kkhuang@nus.edu.sg}
	\affiliation{%
		\institution{National University of Singapore}
		\country{}
	}
	\author{Jing Tang}
	\email{jingtang@ust.hk}
	\affiliation{%
		\institution{The Hong Kong University of Science and Technology (Guangzhou)}
		\institution{The Hong Kong Uni. of Sci. and Tech.}
		\country{}
	}
	
	\author{Juncheng Liu}
	\email{juncheng.liu@u.nus.edu}
	\affiliation{%
		\institution{National University of Singapore}
		\country{}
	}
	
	\author{Renchi Yang}
	\email{renchi@hkbu.edu.hk}
	\affiliation{%
		\institution{Hong Kong Baptist University}
		\country{}
	}
	
	\author{Xiaokui Xiao}
	\email{xkxiao@nus.edu.sg}
	\affiliation{%
		\institution{National University of Singapore}
		\institution{CNRS@CREATE, Singapore}
		\country{}
	}
	
	\begin{abstract}
		Graph Neural Networks (GNNs) have shown superior performance for semi-supervised learning of numerous web applications, such as classification on web services and pages, analysis of online social networks, and recommendation in e-commerce. The state of the art derives representations for {\em all} nodes in graphs following the {\em same} diffusion (message passing) model without discriminating their uniqueness.
		However, (i) labeled nodes involved in model training usually account for a small portion of graphs in the semi-supervised setting, and (ii) different nodes locate at different graph local contexts and it inevitably degrades the representation qualities if treating them undistinguishedly in diffusion. 
		
		To address the above issues, we develop \NDM, a universal node-wise diffusion model, to capture the unique characteristics of each node in diffusion, by which \NDM is able to yield high-quality node representations. In what follows, we customize \NDM for semi-supervised learning and design the \gtro model. In particular, \gtro advances the efficiency significantly since it (i) produces representations for labeled nodes only and (ii) adopts well-designed neighbor sampling techniques tailored for node representation generation. Extensive experimental results on various types of web datasets, including citation, social and co-purchasing graphs, not only verify the state-of-the-art effectiveness of \gtro but also strongly support the remarkable scalability of \gtro. In particular, \gtro completes representation generation and training within $10$ seconds on the dataset with hundreds of millions of nodes and billions of edges, up to orders of magnitude speedups over the baselines, while achieving the highest F1-scores on classification \footnote{The code of \gtro can be accessed at \url{https://github.com/kkhuang81/NIGCN}.}.
	\end{abstract}

	
	\begin{CCSXML}
		<ccs2012>
		<concept>
		<concept_id>10010147.10010257.10010293.10010294</concept_id>
		<concept_desc>Computing methodologies~Neural networks</concept_desc>
		<concept_significance>500</concept_significance>
		</concept>
		<concept>
		<concept_id>10010147.10010257.10010258.10010259.10010263</concept_id>
		<concept_desc>Computing methodologies~Semi-supervised learning</concept_desc>
		<concept_significance>500</concept_significance>
		</concept>
		</ccs2012>
	\end{CCSXML}
	
	\ccsdesc[500]{Computing methodologies~Semi-supervised learning}
	\ccsdesc[500]{Computing methodologies~Neural networks}

	\keywords{graph neural networks, scalability, semi-supervised classification}

	\maketitle
	\begin{sloppy}
		\section{Introduction}\label{sec:intro}

In recent years, Graph Neural Networks (GNNs) have gained increasing attention in both academia and industry due to their superior performance on numerous web applications, such as classification on web services and pages~\cite{zhang2021bilinear, GuoC21}, image search~\cite{ImageSearch}, web spam detection~\cite{BelahcenBS15}, e-commerce recommendations~\cite{YingHCEHL18, Fan0LHZTY19, WuZGHWGC19}, and social analysis~\cite{LiG19, QiuTMDW018, SankarLYS21}. Various GNN models have been developed~\cite{VelickovicCCRLB18,XuLTSKJ18,ChenMX18,ZouHWJSG19,KlicperaBG19,KlicperaWG19,BojchevskiKPKBR20,ChienP0M21,ZhangYSLOTYC21,FengDHYCK022} accordingly. Among them, {\em semi-supervised classification} is one of the most extensively studied problems due to the scarce labeled data in real-world applications~\cite{KaiserNRB17, VartakTMBL17, DingWLSLL20}.

Graph Convolutional Network (\GCN)~\cite{KipfW17} is the seminal GNN model proposed for semi-supervised classification. \GCN conducts {\it feature propagation} and {\it transformation} recursively on graphs and is trained in a full-batch manner, thus suffering from severe scalability issues~\cite{HamiltonYL17, ChenZS18, ChenMX18, WuSZFYW19, ZouHWJSG19, ZengZSKP20, WangHW000W21}. Since then, there has been a large body of research on improving the efficiency.
One line of work focuses on utilizing sampling and preprocessing techniques. Specifically, \GraphSAGE~\cite{HamiltonYL17} and \FastGCN~\cite{ChenMX18} sample a fixed number of neighbors for each layer. \GraphSAINT~\cite{ZengZSKP20} and \ShaDowGCN~\cite{ZengZXSMKPJC21} randomly extract subgraphs with limited sizes as training graphs. \textsf{Cluster-GCN}~\cite{ChiangLSLBH19} partitions graphs into different clusters and then randomly chooses a certain number of clusters as training graphs. Another line of research decouples feature propagation and transformation to ease feature aggregations. In particular, \SGC~\cite{WuSZFYW19} proposes to remove non-linearity in transformation and multiplies the feature matrix to the $K$-th power of the normalized adjacency matrix for feature aggregation. Subsequently, a plethora of decoupled models are developed to optimize the efficiency of feature aggregation by leveraging various graph techniques, including \APPNP~\cite{KlicperaBG19}, \GBP~\cite{ChenWDL00W20}, \AGP~\cite{WangHW000W21}, and \Grandp~\cite{FengDHYCK022}. 

Despite the efficiency advances, current models either calculate node presentations for enormous unlabeled nodes or ignore the unique topological structure of each labeled node during representation generation. Therefore, there is still room for improvement in efficiency and effectiveness. To explain, labeled nodes involved in model training in semi-supervised learning usually take up a small portion of graphs, especially on massive graphs, and computing representations for all nodes in graphs is unnecessarily inefficient. Meanwhile, different nodes reside in different graph locations with distinctive neighborhood contexts. Generating node representations without considering their topological uniqueness inevitably degrades the representation qualities. 

To remedy the above deficiencies, we first develop a node-wise diffusion model \NDM. Specifically, \NDM calculates an individual diffusion length for each node by taking advantage of the unique topological characteristic for high-quality node representations. In the meantime, \NDM employs a  universal diffusion function \GHD adaptive to various graphs. In particular, \GHD is a {\em general heat diffusion} function that is capable of capturing different diffusion patterns on graphs with various densities. By taking \NDM as the diffusion model for feature propagations, we design \gtro (\underline{\bf N}ode-w\underline{\bf I}se GCN), a GCN model with superb scalability. In particular, \gtro only computes representations for the labeled nodes for model training without calculating (hidden) representations for any other nodes. In addition, \gtro adopts customized neighbor sampling techniques during diffusion. By eliminating those unimportant neighbors with noise features, our neighbor sampling techniques not only improve the performance of \gtro for semi-supervised classification but also boost the efficiency significantly.  

We evaluate \gtro on $7$ real-world datasets and compare with $13$ baselines for transductive learning and $7$ competitors for inductive learning. Experimental results not only verify the superior performance of \gtro for semi-supervised classification but also prove the remarkable scalability of \gtro. In particular, \gtro completes feature aggregations and training within $10$ seconds on the dataset with hundreds of millions of nodes and billions of edges, up to orders of magnitude speedups over the baselines, while achieving the highest F1-scores on classification.

In a nutshell, our contributions are summarized as follows.
\begin{itemize}[topsep=1mm,partopsep=0pt,itemsep=1mm,leftmargin=18pt]
\item We propose a node-wise diffusion model \NDM. \NDM customizes each node with a unique diffusion scheme by utilizing the topological characteristics and provides a general heat diffusion function capable of capturing different diffusion patterns on graphs with various densities.
\item We design a scalable GNN model \gtro upon \NDM. \gtro calculates node representation for a small portion of labeled nodes without producing intermediate (hidden) representations for any other nodes. Meanwhile, neighbor sampling techniques adopted by \gtro further boost its scalability significantly. 
\item We conduct comprehensive experiments to verify the state-of-the-art performance of \gtro for semi-supervised classification and the remarkable scalability of \gtro. 

\end{itemize}

\section{Related Work}\label{sec:relatedwork}

\citet{KipfW17} propose the seminal Graph Convolutional Network (\GCN) for semi-supervised classification. However, \GCN suffers from severe scalability issues since it executes the feature propagation and transformation recursively and is trained in a full-batch manner. To alleviate the pain, two directions, \ie decoupled models and sampling-based models, have been explored.

\spara{Decoupled Models} \SGC proposed by \citet{WuSZFYW19} adopts the decoupling scheme by removing non-linearity in feature transformation and propagates features of neighbors within $K$ hops directly, where $K$ is an input parameter. Following \SGC, a plethora of decoupled models have been developed. To consider node proximity, \APPNP~\cite{KlicperaBG19} utilizes personalized PageRank (PPR)~\cite{page1999pagerank, ScarselliTH04} as the diffusion model and takes PPR values of neighbors as aggregation weights. To improve the scalability, \PPRGo~\cite{BojchevskiKPKBR20} reduces the number of neighbors in aggregation by selecting neighbors with top-$K$ PPR values after sorting them. Graph diffusion convolution (\GDC)~\cite{KlicperaWG19} considers various diffusion models, including both PPR and heat kernel PageRank (HKPR) to capture diverse node relationships. Later, \citet{ChenWDL00W20} apply generalized PageRank model~\cite{0005CM19} and propose \GBP that combines reverse push and random walk techniques to approximate feature propagation. \citet{WangHW000W21} point out that \GBP consumes a large amount of memory to store intermediate random walk matrices and propose \AGP that devises a unified graph propagation model and employs forward push and random sampling to select subsets of unimportant neighborhoods so as to accelerate feature propagation. \citet{ZhangYSLOTYC21} consider the number of neighbor hops before the aggregated feature gets smoothing. To this end, they design \NDLS  and calculate an individual local-smoothing iteration for each node on feature aggregation. Recently, \citet{FengDHYCK022} investigate the graph random neural network (GRAND) model. To improve the scalability, they devise \Grandp by leveraging a generalized forward push to compute the propagation matrix for feature aggregation. In addition, \Grandp only incorporates neighbors with top-K values for further scalability improvement.

\spara{Sampling-based Models} To avoid the recursive neighborhood over expansion, \GraphSAGE~\cite{HamiltonYL17} simply samples a fixed number of neighbors uniformly for each layer. Instead of uniform sampling, \FastGCN~\cite{ChenMX18} proposes importance sampling on neighbor selections to reduce sampling variance. Subsequently, \textsf{AS-GCN}~\cite{Huang0RH18} considers the correlations of sampled neighbors from upper layers and develops an adaptive layer-wise sampling method for explicit variance reduction. To guarantee the algorithm convergence, \textsf{VR-GCN} proposed by \citet{ChenZS18} exploits historical hidden representations as control variates and then reduces sampling variance via the control variate technique. Similar to \textsf{AS-GCN}, \textsf{LADIES}~\cite{ZouHWJSG19} also takes into account the layer constraint and devises a layer-wise, neighbor-dependent, and importance sampling manner, where two graph sampling methods are proposed as a consequence. \textsf{Cluster-GCN}~\cite{ChiangLSLBH19} first applies graph cluster algorithms to partition graphs into multiple clusters, and then randomly takes several clusters as training graphs. Similarly, \GraphSAINT~\cite{ZengZSKP20} samples subgraphs as new training graphs, aiming to improve the training efficiency. \citet{HuangZXLZ21} adopt the graph coarsening method developed by~\citet{Loukas19} to reshape the original graph into a smaller graph, aiming to boost the scalability of graph machine learning. Lately, \citet{ZengZXSMKPJC21} propose to extract localized subgraphs with bounded scopes and then run a GNN of arbitrary depth on it. This principle of decoupling GNN scope and depth, named as \textsf{ShaDow}, can be applied to existing GNN models.

However, all the aforementioned methods either (i) generate node representations for {\em all} nodes in the graphs even though labeled nodes in training are scarce or (ii) overlook the topological uniqueness of each node during feature propagation. Ergo, there is still room for improvement in both efficiency and efficacy.

\section{Node-Wise Diffusion Model}\label{sec:modeldesign}

In this section, we reveal the weakness in existing diffusion models and then design \NDM, consisting of two core components, \ie (i) the diffusion matrix and the diffusion length for each node, and (ii) the universal diffusion function generalized to various graphs. 

\subsection{Notations}\label{sec:notation}
\vspace{-1mm}
For the convenience of expression, we first define the frequently used notations. We use calligraphic fonts, bold uppercase letters, and bold lowercase letters to represent sets (\eg $\N$), matrices (\eg $\A$), and vectors (\eg $\x$), respectively. The $i$-th row (resp.\ column) of matrix $\A$ is represented by $\A[i,\cdot]$ (resp.\ $\A[\cdot, i]$). 

Let $\G=(\V, \E, \X)$ be an undirected graph where $\V$ is the node set with $|\V|=n$, $\E$ is the edge set with $|\E|=m$, and $\X \in \R^{n\times f}$ is the feature matrix. Each node $v\in \V$ is associated with a $f$-dimensional feature vector $\x_v \in \X$. For ease of exposition, node $u \in \V$ also indicates its index. Let $\N_u$ be the {\em direct} neighbor set and $d_u=|\N_u|$ be the degree of node $u$. Let $\A \in \R^{n\times n}$ be the adjacency matrix of $\G$, \ie $\A[u,v]=1$ if $\langle u, v \rangle \in \E$; otherwise $\A[u,v]=0$, and $\D \in \R^{n\times n}$ be the diagonal degree matrix of $\G$, \ie $\D[u,u]=d_u$. Following the convention~\cite{ChenWDL00W20,WangHW000W21}, we assume that $\G$ is a {\em self-looped} and connected graph.

\subsection{Diffusion Matrix and Length}\label{sec:diffmatrixlen}

\spara{Diffusion Matrix} Numerous variants of Laplacian matrix are widely adopted as diffusion matrix in existing GNN models~\cite{KipfW17, WuSZFYW19, KlicperaBG19, ChenWDL00W20, Liu0LH21, ZhangYSLOTYC21}. Among them, the transition matrix $\P=\D^{-1}\A$ is intuitive and easy-explained. Let $1= \lambda_1 \ge\lambda_2  \ge \ldots \ge \lambda_n > -1$ be the eigenvalues of $\P$. During an infinite diffusion, any initial state $\pi_0 \in \R^n$ of node set $\V$ converges to the stable state $\pi$, \ie $\pi=\lim_{\ell \to \infty}\pi_0\P^\ell $ where $\pi(v)=\frac{d_v}{2m}$.

\spara{Diffusion Length} As stated, different nodes reside at different local contexts in the graphs, and the corresponding receptive fields for information aggregation differ. Therefore, it is rational that each node $u$ owns a unique length $\ell_u$ of diffusion steps. As desired, node $u$ aggregates informative signals from neighbors within the range of $\ell_u$ hops while obtaining limited marginal information out of the range due to over-smoothing issues. To better quantify the effective vicinity, we first define $\tau$-distance as follows. 
\begin{definition}[$\tau$-Distance]\label{def:eps-vicinity}
Given a positive constant $\tau$ and a graph $\G=(\V, \E)$ with diffusion matrix $\P$, a length $\ell$ is called $\tau$-distance of node $u\in \V$ if it satisfies that for every $v \in \V$, $\frac{|\P^\ell[u,v]-\pi(v)|}{\pi(v)} \le \tau$.
\end{definition}
According to Definition~\ref{def:eps-vicinity}, $\ell_u$ being $\tau$-distance of $u$ ensures that informative signals from neighbors are aggregated. On the other hand, to avoid over-smoothing, $\ell_u$ should not be too large. In the following, we provide an appropriate setting of $\ell_u$ fitting both criteria.

\begin{theorem}\label{thm:diffusionstep}
Given a positive constant $\tau$ and a graph $\G=(\V, \E)$ with diffusion matrix $\P$, $\ell_u :=\left\lceil\log_\lambda{\frac{\tau \sqrt{d_{\min}d_u}}{2m}}\right\rceil$ is $\tau$-distance of node $u$, where $\lambda=\max\{\lambda_2,-\lambda_n\}$ and $d_{\min}=\min\{d_v\colon v \in \V\}$.  
\end{theorem}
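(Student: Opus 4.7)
The plan is to exploit the fact that $\P = \D^{-1}\A$ is similar to the symmetric normalized matrix $\mathbf{S} = \D^{-1/2} \A \D^{-1/2}$, which lets us apply spectral decomposition. Concretely, write $\mathbf{S} = \sum_{i=1}^{n} \lambda_i \mathbf{u}_i \mathbf{u}_i^{\top}$ with orthonormal eigenvectors $\mathbf{u}_1,\ldots,\mathbf{u}_n$, where $\mathbf{u}_1[v] = \sqrt{d_v/(2m)}$ corresponds to $\lambda_1 = 1$. Since $\P^\ell = \D^{-1/2}\mathbf{S}^\ell \D^{1/2}$, we get the closed form
\begin{equation*}
\P^\ell[u,v] \;=\; \frac{\sqrt{d_v}}{\sqrt{d_u}} \sum_{i=1}^n \lambda_i^\ell \,\mathbf{u}_i[u]\,\mathbf{u}_i[v].
\end{equation*}

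Next I would isolate the stationary contribution. The $i=1$ summand equals $\frac{\sqrt{d_v}}{\sqrt{d_u}} \cdot \frac{\sqrt{d_u d_v}}{2m} = \frac{d_v}{2m} = \pi(v)$, so
\begin{equation*}
\P^\ell[u,v] - \pi(v) \;=\; \frac{\sqrt{d_v}}{\sqrt{d_u}} \sum_{i=2}^{n} \lambda_i^\ell \,\mathbf{u}_i[u]\,\mathbf{u}_i[v].
\end{equation*}
Using $|\lambda_i| \le \lambda$ for $i \ge 2$, together with the Cauchy--Schwarz inequality and the orthogonality identity $\sum_{i=1}^n \mathbf{u}_i[w]^2 = 1$ (so $\sum_{i=2}^n \mathbf{u}_i[w]^2 \le 1$ for any $w$), the tail sum is bounded by $\lambda^\ell$. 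Dividing by $\pi(v) = d_v/(2m)$ yields the key bound
\begin{equation*}
\frac{\lvert \P^\ell[u,v] - \pi(v)\rvert}{\pi(v)} \;\le\; \frac{2m}{\sqrt{d_u d_v}} \,\lambda^\ell \;\le\; \frac{2m}{\sqrt{d_u\, d_{\min}}} \,\lambda^\ell.
\end{equation*}

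Finally, I would solve for the smallest $\ell$ making the right-hand side at most $\tau$. Since $\lambda < 1$, taking $\log_\lambda$ flips the inequality and gives $\ell \ge \log_\lambda \frac{\tau \sqrt{d_{\min} d_u}}{2m}$, so the ceiling $\ell_u = \lceil \log_\lambda \frac{\tau \sqrt{d_{\min} d_u}}{2m} \rceil$ satisfies the $\tau$-distance condition of Definition~\ref{def:eps-vicinity} uniformly over $v$.

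The main obstacles I anticipate are mostly bookkeeping rather than conceptual: (i) one must justify that the non-symmetric $\P$ has real eigenvalues and admits the similarity transform to $\mathbf{S}$, which relies on $\G$ being undirected and self-looped (so $d_u > 0$ and $\D^{-1/2}$ is well defined), and (ii) one has to ensure the Cauchy--Schwarz bound really gives $\lambda^\ell$ without an extra factor — this hinges on using $\sum_{i\ge 2}\mathbf{u}_i[w]^2 \le 1$ rather than the tighter $1 - d_w/(2m)$, which is where a little slack is absorbed so that $d_{\min}$ (and not some $v$-dependent quantity) appears cleanly in the final bound. Everything else is direct algebraic manipulation.
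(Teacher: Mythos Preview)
Your proposal is correct and follows essentially the same route as the paper: pass to the symmetric matrix $\D^{-1/2}\A\D^{-1/2}$ via similarity, spectrally decompose, peel off the $\lambda_1=1$ term as $\pi(v)$, and bound the remaining sum by $\lambda^\ell$ using Cauchy--Schwarz together with the orthonormality of the eigenbasis. The only cosmetic difference is that the paper packages the eigenvector entries into coefficients $\alpha_i=\mathbf{e}_u\D^{-1/2}\mathbf{u}_i^\top$ and $\beta_i=\mathbf{e}_v\D^{1/2}\mathbf{u}_i^\top$ before applying Cauchy--Schwarz, whereas you work directly with $\mathbf{u}_i[u],\mathbf{u}_i[v]$; the resulting bound $\frac{2m\lambda^\ell}{\sqrt{d_u d_v}}$ and the final step replacing $d_v$ by $d_{\min}$ are identical.
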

\begin{proof}[Proof of Theorem~\ref{thm:diffusionstep}]
Let $\mathbf{e}_u\in \R^{1\times n}$ be a one-hot vector having $1$ in coordinate $u \in \V$ and $\1_n \in \R^{1\times n}$ be the $1$-vector of size $n$. Then, $\P^\ell [u,v] =\mathbf{e}_u\P^\ell\mathbf{e}^\top_v$. Let $\tilde{\P}=\D^{1/2}\P\D^{-1/2}=\D^{-1/2}\A\D^{-1/2}$ and $\mathbf{u}^\top_i$ be the corresponding eigenvector of its $i$-th eigenvalue (sorted in descending order) of $\tilde{\P}$. For $\mathbf{e}_u$ and $\mathbf{e}_v$, we decompose
\begin{equation*}
    \mathbf{e}_u\D^{-1/2}=\sum\nolimits^n_{i=1}\alpha_i \mathbf{u}_i, \text{ and } \mathbf{e}_v\D^{1/2}=\sum\nolimits^n_{i=1}\beta_i \mathbf{u}_i.
\end{equation*}
Note that $\{\mathbf{u}^\top_1,\dotsc,\mathbf{u}^\top_n\}$ form the orthonormal basis and $\mathbf{u}_1=\frac{\1_n \D^{1/2}}{\sqrt{2m}}$. Thus, we have 
$\alpha_1=\mathbf{e}_u\D^{-1/2}\mathbf{u}^\top_1 = \frac{1}{\sqrt{2m}}$ and $\beta_1=\mathbf{e}_v\D^{1/2}\mathbf{u}^\top_1= \frac{d_v}{\sqrt{2m}}$. Since  $\tilde{\P}$ is the similar matrix of $\P$, they share the same eigenvalues. Therefore, we have
\begin{align*}
&\frac{\left|\P^\ell [u,v] -\pi(v) \right|}{\pi(v)} 
=\frac{\left|\mathbf{e}_u\P^\ell\mathbf{e}^\top_v - \pi(v) \right|}{\pi(v)} 
=\frac{\left|\mathbf{e}_u\D^{-1/2}\tilde{\P}^\ell\D^{1/2}\mathbf{e}^\top_v - \pi(v) \right|}{\pi(v)} \\
&=\frac{ \left|\sum^{n}_{i=1}\alpha_i\beta_i\lambda_i^\ell-\pi(v) \right|}{\pi(v)} 
=\frac{ \left|\sum^{n}_{i=2}\alpha_i\beta_i\lambda_i^\ell \right|}{\pi(v)} 
\le \lambda^\ell \cdot \frac{ \sum^{n}_{i=2} \left|\alpha_i\beta_i \right|}{\pi(v)} \\
&\le \lambda^\ell\cdot  \frac{ \| \mathbf{e}_u \D^{-1/2}\| \| \mathbf{e}_v \D^{1/2}\|}{ d_v/2m} 
= \frac{2m\lambda^\ell}{\sqrt{ d_vd_u}},
\end{align*}
where the second inequality is by Cauchy–Schwarz inequality. Finally, setting $ \ell:=\left\lceil\log_\lambda{\frac{\tau \sqrt{d_{\min}d_u}}{2m}}\right\rceil$ completes the proof.
\end{proof}
For the $\ell_u$ defined in Theorem~\ref{thm:diffusionstep}, it is $\tau$-distance of node $u$ and in the meantime involves the topological uniqueness of node $u$. Moreover, the performance can be further improved by tuning the hyperparameter $\tau$.

\subsection{Universal Diffusion Function}\label{sec:diffunc}

As we know, the diffusion model defined by the symmetrically normalized Laplacian matrix $\L=\I-\D^{-1/2}\A\D^{-1/2}$ is derived from {\em Graph Heat Equation}~\cite{chung1997spectral,wang2021dissecting}, \ie
\begin{equation}\label{eqn:heateqn}
\frac{\d \H_t}{\d t}=-\L\H_t, \text{ and } \H_0=\X,
\end{equation}
where $\H_t$ is the node status of graph $\G$ at time $t$. By solving the above differential function, we have 
\begin{equation}
\H_t=\e^{-t\L}\X=\e^{-t(\I-\TA)}\X=\e^{-t}\sum^\infty_{\ell=0}\frac{t^\ell\TA^\ell}{\ell\,!}\X,   
\end{equation}
where $\TA=\D^{-1/2}\A\D^{-1/2}$. In this regard, the underlying diffusion follows the {\em Heat Kernel PageRank} (HKPR) function as
\begin{equation}\label{eqn:heatkernel}
f(\omega, \ell)=\e^{-\omega}\frac{\omega^\ell}{\ell\,!},    
\end{equation}
where $\omega \in \mathbb{Z}^+$ is the parameter. However, $f(\omega, \ell)$ is neither expressive nor general enough to act as the universal diffusion function for real-world graphs, hinted by the following graph property.

\begin{property}[\cite{chung1997spectral}]\label{pro:prop1}
For graph $\G$ with average degree $d_\G$, we have $1-\Delta_\lambda=O(\frac{1}{\sqrt{d_\G}})$ where $\Delta_\lambda$ is the spectral gap of $\G$.
\end{property}

For the diffusion matrix $\P$ defined on $\G$, we have $\lambda=1-\Delta_\lambda$. Meanwhile, according to the analysis of Theorem~\ref{thm:diffusionstep}, we know that $\P^\ell [u,v] -\pi(v)=\sum^{n}_{i=2}\alpha_i\beta_i\lambda_i^\ell$, representing the convergence, is (usually) dominated by $\lambda^\ell$. As a result, diffusion on graphs with different densities, \ie~$d_\G$, converges at different paces. In particular, sparse graphs with small $d_\G$ incurring large $\lambda$ tend to incorporate neighbors in a long range while dense graphs with large $d_\G$ incurring small $\lambda$ are prone to aggregate neighbors not far away. In addition, it has been widely reported in the literature~\cite{KlicperaWG19, FengDHYCK022} that different graphs ask for different diffusion functions, which is also verified by our experiments in Section~\ref{sec:performance}.

To serve the universal purpose, a qualified diffusion function should be able to (i) expand smoothly in long ranges, (ii) decrease sharply in short intervals, and (iii) peak at specified hops, as required by various graphs accordingly. Clearly, the HKPR function in \eqref{eqn:heatkernel} fulfills the latter two requirements but fails the first one since it decreases exponentially when $\ell \ge \omega$. One may propose to consider {\em Personalized PageRank} (PPR). However, the PPR function is monotonically decreasing and thus cannot reach condition (iii). 

\begin{figure}[!bpt]
\centering
\begin{small}
\hspace{-1mm}
\subfloat[Smooth]{
\begin{tikzpicture}[scale=1,every mark/.append style={mark size=1.5pt}]
    \begin{axis}[
        height=\columnwidth/2.5,
        width=\columnwidth/2.4,
        ylabel={\em $\omega=1.1, \ \rho=0.05$},
        xlabel={\em $\ell$},
        xlabel style={yshift=0.1cm},
        xmin=0, xmax=20.5,
        ymin=0, ymax=0.1,
        xtick={0,5,10,15,20},
        ytick={0,0.1},
        yticklabel style = {font=\scriptsize},
        ylabel style={font=\footnotesize},
        every axis y label/.style={font=\scriptsize,at={(current axis.north west)},right=8mm,above=0mm},
    ] 
    \addplot[line width=0.4mm,color=blue] file[skip first]{smooth.dat}; 
    \end{axis}
\end{tikzpicture}}
\subfloat[PPR]{
\begin{tikzpicture}[scale=1,every mark/.append style={mark size=1.5pt}]
    \begin{axis}[
        height=\columnwidth/2.5,
        width=\columnwidth/2.4,
        ylabel={\em $\omega=0.5, \ \rho=0$},
        xlabel={\em $\ell$},
        xlabel style={yshift=0.1cm},
        xmin=0, xmax=10.5,
        ymin=0, ymax=0.6,
        xtick={0,1,2,3,4,5,6,7,8,9,10},
        ytick={0,0.2,0.4,0.6},
        yticklabel style = {font=\scriptsize},
        ylabel style={font=\footnotesize},
        every axis y label/.style={font=\scriptsize,at={(current axis.north west)},right=8mm,above=0mm},
    ] 
    \addplot[line width=0.4mm,color=blue] file[skip first]{PPR.dat}; 
    \end{axis}
\end{tikzpicture}\hspace{0mm}}
\subfloat[HKPR]{
\begin{tikzpicture}[scale=1,every mark/.append style={mark size=1.5pt}]
    \begin{axis}[
        height=\columnwidth/2.5,
        width=\columnwidth/2.4,
        ylabel={\em $\omega=5.0, \ \rho=1.0$},
        xlabel={\em $\ell$},
        xlabel style={yshift=0.1cm},
        xmin=0, xmax=15.5,
        ymin=0, ymax=0.2,
        xtick={0,3,6,9,12,15},
        ytick={0,0.1,0.2},
        yticklabel style = {font=\scriptsize},
        ylabel style={font=\footnotesize},
        every axis y label/.style={font=\scriptsize,at={(current axis.north west)},right=8mm,above=0mm},
    ] 
    \addplot[line width=0.4mm,color=blue] file[skip first]{HKPR.dat}; 
    \end{axis}
\end{tikzpicture}\hspace{0mm}}
\end{small}
\caption{Three exemplary expansion tendency of \GHD.} \label{fig:udf}
\end{figure}
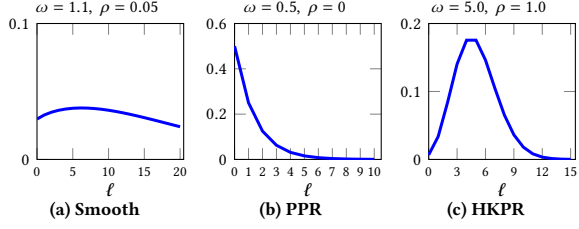

Inspired by the above analysis, we try to ameliorate $f(\omega, \ell)$ to a universal diffusion function with a controllable change tendency for general purposes. To this end, we extend the graph heat diffusion Equation~\eqref{eqn:heatkernel} by introducing an extra {\em power parameter} $\rho \in \R^{+}$ and devise our {\em General Heat Diffusion} (\GHD) function as 
\begin{equation}\label{eqn:unidiff}
U(\omega, \rho, \ell) =\frac{\omega^\ell}{ (\ell\,!)^\rho\cdot C}
\end{equation}
for the diffusion weight at the $\ell$-th hop, where $\omega \in \R^{+}$ is the new heat parameter and $C= \sum^{\infty}_{\ell=0} \frac{\omega^\ell}{(\ell\,!)^\rho}$ is the normalization factor.

As desired, \GHD can be regarded as a general extension of the graph heat diffusion model, and parameters $\omega$ and $\rho$ together determine the expansion tendency. In particular, it is trivial to verify that \GHD is degraded into HKPR when $\rho=1$, and \GHD becomes PPR when $\rho=0$. As illustrated in \figurename~\ref{fig:udf}, by setting different $\omega$ and $\rho$ combinations, \GHD is able to exhibit smooth, exponential (\ie~PPR), or peak expansion (\ie~HKPR) tendency.

\subsection{Diffusion Model Design}\label{sec:model}

Upon $\tau$-distance and diffusion function \UDF, our node-wise diffusion model (\NDM) can be concreted. Specifically, given a target set $\T \subseteq \V$, the representation $\Z_\T$ under \NDM is calculated as
\begin{equation}\label{eqn:ndm}
    \Z_\T=\sum^{L}_{\ell=0}\mathbf{U}\Gamma \P^\ell\X,
\end{equation}
where $L=\max\{\ell_u\colon \forall u\in \T\}$, $\mathbf{U}=\mathrm{Diag}\{U(\omega, \rho, \ell)\colon \forall u \in \T\} \in \R^{|\T|\times |\T|}$ is a diagonal matrix, and $\Gamma=\I[\T, \cdot] \in \R^{|\T| \times n}$ is the indicator matrix, \ie $\Gamma[i,u]=1$ if $\T[i]=u$ and $\Gamma[i,u]=0$ otherwise. 

\begin{algorithm}[!t]
\setlength{\hsize}{0.95\linewidth}
	\caption{Node-wise Diffusion Model}
	\label{alg:NDM}
	\KwIn{Graph $\G$, feature matrix $\X$, target set $\T$,\newline and hyperparameters $\tau$, $\omega$, $\rho$}
	\KwOut{Representation $\Z_\T$}
    $d_{\max}\gets \max_{u\in \T}\{d_u\}$\;
    $L\gets \left\lceil\log_\lambda{\frac{\tau \sqrt{d_{\min}d_{\max}}}{2m}}\right\rceil$\;
	$U$ is calculated according to~\eqref{eqn:unidiff}\;
	$\Gamma\gets \I[\T, \cdot]$, $\Z_\T \gets \mathbf{0}^{|\T|\times f}$\;
	\For{$\ell\gets 0$ \KwTo $L$}
	{
	    $\mathbf{U} \gets \textrm{Diag}\{U(\omega, \rho, \ell)\colon \forall u \in \T\}$\;
	    $\Z_\T \gets \Z_\T + \mathbf{U}\Gamma$\; 
	    $\Gamma \gets \Gamma \P$, $\ell \gets \ell+1$\;  
	} 
	$\Z_\T\gets \Z_\T\X$\; 
	\Return $\Z_\T$\;
\end{algorithm}

The pseudo-code of \NDM is presented in Algorithm~\ref{alg:NDM}. \NDM first finds the largest degree $d_{\max}$ for nodes $\T$, and computes the corresponding $\tau$-distance as $L$. Then, \NDM accumulates the weights of neighbors within $L$ ranges for each node $u\in \T$, recorded as $\Z_\T$. Note that $\mathbf{U}[u,u]=0$ if $\ell > L$. Finally, representation $\Z_\T$ is calculated by multiplying the feature matrix $\X$.

\spara{Time Complexity} It takes $O(m)$ time to calculate $\lambda$ using the iterative methods~\cite{James1997}, and hence computing $L$ take $O(m+|\T|)$ time. Matrix multiplications $\mathbf{U}\Gamma$ and $\Gamma \P$ dominate the running time, which takes time complexities of $O(n|\T|)$ and $O(m|\T|)$, respectively. Therefore, as it takes $O(f|\T|)$ time to compute $Z_\T\X$, the total time complexity of \NDM is $O((m+n)L|\T|+f|\T|)$. 

\section{Optimization in Node Representation Learning}\label{sec:optimization} 

Algorithm~\ref{alg:NDM} in Section~\ref{sec:modeldesign} presents a general node-wise diffusion model. However, it is yet optimal to be applied to reality. In this section, we aim to instantiate \NDM in a practical manner and optimize the procedure of feature propagations.
\subsection{Instantiation of \NDM}\label{sec:instant}

\spara{Practical Implementation of $\tau$-Distance} Calculating the $\tau$-distance of each node is one of the critical steps in \NDM, which requires the second largest eigenvalue $\lambda$ of the diffusion matrix. However, it is computationally expensive to compute $\lambda$ for large graphs. To circumvent the scenario, we employ property~\ref{pro:prop1} to substitute $\lambda$ without damaging the efficacy of \NDM.

As we analyze in Section~\ref{sec:diffunc}, according to Property~\ref{pro:prop1}, we borrow a correction factor $C_\G$ specific for graph $\G$ to ensure $\lambda=1-\Delta_\lambda=\frac{C_\G}{\sqrt{d_\G}}$. Meanwhile, for the sake of practicality, we could merge hyperparameter $\tau$ and $C_\G$ into one tunable parameter $\tau^\prime$ to control the bound of $\tau$-distance $\ell_u$ such that 
\begin{equation}\label{eqn:newrange}
\ell_u=\log_\lambda{\frac{\tau \sqrt{d_{\min}d_u}}{2m}}=\frac{\ln{\frac{2m}{\sqrt{d_{\min}d_u}}}-\ln\tau}{\ln{\sqrt{d_\G}}-\ln C_\G}:= \tau^\prime\frac{\ln{\frac{2m}{\sqrt{d_{\min}d_u}}}}{\ln{\sqrt{d_\G}}}.
\end{equation}

\spara{Important Neighbor Identification and Selection} \NDM in Algorithm~\ref{alg:NDM} aggregates all neighbors during diffusion for each node, which, however, is neither effective nor efficient. The rationale is twofold. 

First, it is trivial to see that the sum of weights in the $\ell$-th hop is $\sum_{v\in \V} U(\omega, \rho, \ell)(\D^{-1}\A)^\ell[u,v]=U(\omega, \rho, \ell)$. If $n$ nodes are visited, the average weight is $\Theta(\frac{U(\omega, \rho, \ell)}{n})$, \ie the majority of nodes contribute negligibly to feature aggregations and only a small portion of neighbors with large weights matters. Second, as found in~\cite{XieLYW020, nt2019revisiting}, input data contain not only the low-frequency ground truth but also noises that can originate from falsely labeled data or features. Consequently, incorporating features of those neighbors could potentially incur harmful noises. Therefore, it is a necessity to select important neighbors and filter out insignificant neighbors. 

Based on the above analysis, we aim to identify important neighbors for target node $u$. For ease of exposition, we first define the weight function $\phi(\ell, u, v) =U(\omega, \rho, \ell)\P^\ell[u,v]$ to quantify the importance of neighbor node $v$ to target node $u$, and then formalize the concept of {\em $\varepsilon$-importance neighbor} as follows.

\begin{definition}[$\varepsilon$-Importance Neighbor]
Given a target node $u$ and threshold $\varepsilon \in (0,1)$, node $v$ is called $\varepsilon$-importance neighbor of $u$ if $\exists \ell \in \{0,1,\ldots,\ell_u\}$, we have $\phi(\ell, u, v) \ge \varepsilon$.
\end{definition}

Thanks to the good characteristic of \NDM, a sufficient number of random walks (RWs) are able to identify {\em all} such $\varepsilon$-importance neighbors with high probability, as proved in the following lemma.

\begin{lemma}\label{lem:theta}
Given a target node $u$, threshold $\varepsilon \in (0,1)$, and failure probability $\delta\in (0,1)$, assume $\phi(\ell, u, v) \ge \varepsilon$. Suppose $\theta= \lceil \frac{2\eta^2}{\varepsilon}\log(\frac{1}{\delta \varepsilon}) \rceil$ RWs are generated from $u$ and visit $v$ for $\theta^{(\ell)}_v$ times at the $\ell$-th step. For the weight estimation $\hat{\phi}(\ell, u, v)=\frac{U(\omega, \rho, \ell)\theta^{(\ell)}_v}{\theta}$, we have
\begin{align*}
\Pr\bigg[\underset{0\le \ell\le \ell_u}{\bigvee}\bigg\{\underset{\{v\colon \phi(\ell, u, v) \ge \varepsilon\}}{\bigvee}\Big(\hat{\phi}(\ell, u, v) \le \frac{\phi(\ell, u, v)}{\eta}\Big)\bigg\}\bigg] \le \delta,
\end{align*}
where $\eta>1$ controls the approximation. 
\end{lemma}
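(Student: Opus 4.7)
The plan is to treat $\hat{\phi}(\ell,u,v)$ as an unbiased random-walk estimator of $\phi(\ell,u,v)$ built from a binomial count, and then combine a multiplicative Chernoff bound with a counting-based union bound. First I would observe that if the $\theta$ walks from $u$ advance independently according to $\P$, then the indicator ``walk $i$ sits at $v$ at step $\ell$'' is Bernoulli with success probability $\P^\ell[u,v]$, so $\theta_v^{(\ell)}\sim\mathrm{Binomial}(\theta,\P^\ell[u,v])$ with mean $\mu:=\theta\,\P^\ell[u,v]$. Multiplying by $U(\omega,\rho,\ell)/\theta$ gives $\mathbb{E}[\hat{\phi}(\ell,u,v)]=\phi(\ell,u,v)$, and the bad event $\hat{\phi}(\ell,u,v)\le \phi(\ell,u,v)/\eta$ is exactly $\theta_v^{(\ell)}\le \mu/\eta$.

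Next, for each fixed pair $(\ell,v)$ with $\phi(\ell,u,v)\ge\varepsilon$, I would apply the multiplicative Chernoff bound with deviation factor $1-1/\eta$ to obtain $\Pr[\theta_v^{(\ell)}\le \mu/\eta]\le \exp(-(1-1/\eta)^2\mu/2)$. To turn this into a usable number, I would lower bound $\mu$ using the normalization of $U$: because $\sum_{\ell}U(\omega,\rho,\ell)=1$ implies $U(\omega,\rho,\ell)\le 1$, the hypothesis $\phi(\ell,u,v)\ge\varepsilon$ forces $\P^\ell[u,v]\ge\varepsilon$, and hence $\mu\ge\theta\varepsilon$. Substituting the prescribed $\theta=\lceil 2\eta^2\varepsilon^{-1}\log(1/(\delta\varepsilon))\rceil$ then bounds the per-event failure probability by at most $\delta\varepsilon$ (up to the constant chase below).

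The key enabling observation for the union bound is the ``total weight'' identity $\sum_{\ell\ge 0}\sum_{v\in\V}\phi(\ell,u,v)=\sum_{\ell\ge 0}U(\omega,\rho,\ell)=1$, which follows from $\sum_v \P^\ell[u,v]=1$ together with the normalization of $U$. A Markov-style counting argument then shows that across all $\ell\in\{0,\ldots,\ell_u\}$, at most $1/\varepsilon$ pairs $(\ell,v)$ can satisfy $\phi(\ell,u,v)\ge\varepsilon$. Union bounding the per-event Chernoff estimate over these at most $1/\varepsilon$ events produces the claimed overall bound of $\delta$.

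The main obstacle I expect is calibrating the constants so that the exponent $(1-1/\eta)^2\mu/2$ from Chernoff, the $\eta^2$ in $\theta$, and the $1/\varepsilon$ blowup from the union bound compose to leave exactly $\log(1/(\delta\varepsilon))$ in the exponent (the clean case is $\eta\ge 2$; for $\eta$ close to $1$ one may need the tighter additive form $\Pr[X\le\mu-t]\le \exp(-t^2/(2\mu))$ with $t=\mu(1-1/\eta)$ to extract every needed constant). Everything else is routine, so the two genuinely non-obvious moves are the reduction to a binomial estimator and the recognition that $\sum_{\ell,v}\phi(\ell,u,v)=1$ is exactly the invariant that caps the union-bound cost at $1/\varepsilon$ rather than at $n$ or $(\ell_u+1)/\varepsilon$.
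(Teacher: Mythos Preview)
Your proposal is correct and essentially mirrors the paper's proof: a Chernoff lower-tail bound for each pair $(\ell,v)$, combined with a union bound whose size is capped at $1/\varepsilon$ via the total-mass identity $\sum_{\ell,v}\phi(\ell,u,v)=1$. The one cosmetic difference is that the paper applies Chernoff directly to $\hat\phi$ (taking $X_i$ to be $U(\omega,\rho,\ell)$ times the step-$\ell$ visit indicator, so $\mathbb{E}[X]=\phi$) with deviation $\zeta=\phi/\eta$, which produces the exponent $\theta\phi/(2\eta^2)\ge\theta\varepsilon/(2\eta^2)$ without your detour through $\P^\ell[u,v]\ge\varepsilon$; note that this literally bounds $\Pr[\hat\phi\le\phi(1-1/\eta)]$, which dominates the stated event precisely when $\eta\ge2$, so your caveat about the constants is exactly on point.
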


\pgfplotsset{compat = newest}

\begin{figure}[!bpt]
\centering
\begin{small}
\begin{tikzpicture}[scale=1,every mark/.append style={mark size=1.5pt}]
    \begin{axis}[
        height=\columnwidth/2.5,
        width=\columnwidth/1.8,
        ylabel= \bf{Weight},
        xlabel={Number of neighbors},
        xmin=0.5, xmax=314.5,
        ymin=0, ymax=1,
        xtick={10,100,200,300},
        yticklabel style = {font=\scriptsize},
        ylabel style={font=\scriptsize},
        ymode=log,
        log basis y={10},
    ] 
    \addplot[line width=0.4mm,color=blue] file[skip first]{data.dat}; 
    \end{axis}
\end{tikzpicture}\hspace{8mm}
\end{small}
 \vspace{-1mm}
\caption{Weight Distribution of Neighbors.} \label{fig:weightdis}
\end{figure}
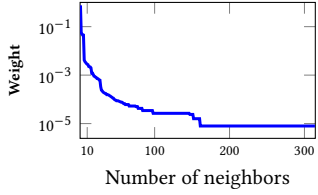

Lemma~\ref{lem:theta} affirms that sufficient RWs could capture $\varepsilon$-importance neighbors with high probability. However, there still contains deficiencies. In particular, along with those $\varepsilon$-important neighbors, many insignificant neighbors will be inevitably selected. For illustration, we randomly choose one target node on dataset Amazon and select its neighbors using RWs. We observe that $10.6\%$ neighbors contribute to $99\%$ weights, and the rest $89.4\%$ neighbors share the left $1\%$ weights, as shown in Figure~\ref{fig:weightdis}. The amount of those insignificant neighbors could unavoidably impair the representation quality. To alleviate the deficiency, we propose to preserve the first-$K$ neighbors with $K=\frac{1}{\varepsilon^2}$.\footnote{One may propose to adopt top-$K$ neighbors. However, top-$K$ selection would incur enormous computation overheads since it requires sorting all neighbors by weights.} 

To explain, in the $\ell$-th hop, each $\varepsilon$-important neighbor will be selected with probability at least $\frac{\varepsilon}{U(\omega, \rho, \ell)}$, and there are at most $\frac{U(\omega, \rho, \ell)}{\varepsilon}$ important neighbors. Thus $\varepsilon$-important neighbors from the $\ell$-th hop will be picked after at most $\frac{U^2(\omega, \rho, \ell)}{\varepsilon^2}$ random selections in expectation. By summing up all $\ell_u$ hops, we have 
\[\sum\nolimits^{\ell_u}_{\ell=0}\tfrac{U^2(\omega, \rho, \ell)}{\varepsilon^2}  \le \sum\nolimits^{\ell_u}_{\ell=0} \tfrac{  U(\omega, \rho, \ell)}{\varepsilon^2} = \tfrac{1}{\varepsilon^2}.\] 

Notice that neither RWs selection nor first-$K$ selection is suitable to solely function as stop conditions. As stated, RW inevitably incurs substantial unimportant neighbors, while first-$K$ selection alone is not bound to terminate when no sufficient neighbors exist. Hence, they compensate each other for better performance. As evaluated in Section~\ref{sec:abla}, first-$K$ selection further boosts the effectiveness notably.

\subsection{Optimized Algorithm \gtro}\label{sec:optimizedalgo}

We propose \gtro in Algorithm~\ref{alg:gtro}, the GCN model by instantiating \NDM. We first initialize the number $\theta$ of RWs according to Lemma~\ref{lem:theta}. Next, we generate length-$\ell_u$ RWs for each $u \in \T$. If neighbor $v$ is visited at the $\ell$-th step, we increase its weight $t_v$ by $\frac{U(\omega, \rho, \ell)}{\theta}$ and store it into set $\S$. This procedure terminates if either the number of RWs reaches $\theta$ or the condition $|\S|\ge \frac{1}{\varepsilon^2}$ is met. Afterward, we update on $\z_u$ (Line~\ref{lineopt:lazyupdate}). Eventually, the final representation $\Z_\T$ is returned once all $|\T|$ target nodes have been processed. 

\begin{algorithm}[!t]
\setlength{\hsize}{0.98\linewidth}
	\caption{\gtro}
	\label{alg:gtro}
	\KwIn{Graph $\G$, feature matrix $\X$, target set $\T$, \newline parameters $\eta$ and $\delta$,  hyperparameters $\tau^\prime$, $\omega$, $\rho$, $\varepsilon$}
	\KwOut{Representation $\Z_\T$}
	$d_\G \gets \frac{2m}{n}$, $\theta\gets \lceil \frac{2\eta^2}{\varepsilon}\log(\frac{1}{\delta \varepsilon}) \rceil$, $\Z_\T \gets \mathbf{0}^{|\T|\times f}$\; 
	\For{$u\in \T$}
	{
	$\ell_u \gets \left \lceil  \tau^\prime\frac{\ln{\frac{2m}{\sqrt{d_{\min}d_u}}}}{\ln{\sqrt{\Delta_\G}}} \right \rceil$\; 
	   \For{$i\leftarrow 1$ \KwTo $\theta$} 
	   {
	      Generate a random walk from node $u$ with length $\ell_u$\;
	      \If{$v$ is visited at the $\ell$-th step}
	      {
	        $t_v \gets t_v+ \frac{U(\omega, \rho, \ell)}{\theta}$\;
	        $\S \gets \S \cup \{v\}$\;
	      }
	      \lIf{$|\S|\ge \frac{1}{\varepsilon^2}$}{\KwBreak}
	   }
	  \lFor{$v\in \S$}{$\z_u \gets \z_u+ t_v\cdot \x_v$\label{lineopt:lazyupdate}} 
	}
	\Return $\Z_\T$\;
\end{algorithm}

Accordingly, for a target node $u$ in $\T$, \gtro is formulated as 
\begin{equation}\label{eqn:gtro}
\textstyle \z_u= \sum_{v\in \{|\textrm{RW}(\N^{(0)}_u \cup \N^{(1)}_u \cup \dotsb \cup \N^{(\ell_u)}_u)|_{\le K}\}} \sum^{\ell_u}_{\ell=0} \frac{U(\omega, \rho, \ell)\theta^{(\ell)}_v}{\theta} \cdot \x_v,    
\end{equation}
where $\textrm{RW}(\N^{(0)}_u \cup \N^{(1)}_u \cup \dotsb \cup \N^{(\ell_u)}_u)$ is the neighbor set identified by RWs within $\ell_u$ hops, $|\cdot|_{\le K}$ indicates the first-$K$ neighbors, and $\sum^{\ell_u}_{\ell=0} \frac{U(\omega, \rho, \ell)\theta^{(\ell)}_v}{\theta}$ is the total estimated weights for selected neighbor $v$.

\spara{Time Complexity} For each node $u$, at most $\theta$ RWs of length $\ell_u$ are generated at the cost of $O(\theta \ell_u)$, and the total number of neighbors is bounded by $O(\theta \ell_u)$, which limits the cost on the feature update to $O(\theta \ell_u f)$. The total cost is $O((f+1)\theta \ell_u)$ for each target node. Let $L=\max\{\ell_u \colon \forall u\in \T\}$. By replacing $\theta=O(\frac{1}{\varepsilon}\log(\frac{1}{\delta \varepsilon}))$, the resulting time complexity of \gtro is $O(\frac{L|\T|f}{\varepsilon}\log\frac{1}{\delta \varepsilon})$.

\subsection{Time Complexity Comparison} 

$\varepsilon$-importance neighbors play a crucial role in feature aggregations. We assume that qualified representations incorporate all $\varepsilon$-importance neighbors with high probability. When capturing such $\varepsilon$-importance neighbors, we analyze and compare the sampling complexities of $9$ representative models\footnote{We assume the models without explicit sampling process with sampling rate $100\%$.} with \gtro, as summarized in Table~\ref{tbl:sampling}. 

\begin{table}[!t]
\caption{Time complexity ($b$ is the batch size, $k$ is the sample size in one hop, $L$ is the propagation length, and $L^\prime$ is the number of model layers).} \label{tbl:sampling}
\begin{tabular}{@{}lll@{}}
\toprule
GNN Method & Preprocessing & Training \\ \midrule
\GraphSAGE  & -  &  $O\big(|\T|k^Lf^2\big)$  \\
\GraphSAINT & -  & $O\big(\frac{Lbmf}{n} + L nf^2\big)$ \\
\FastGCN & - &  $O\big(L|\T|kf + L|\T|f^2\big)$\\
\ShaDowGCN & - &  $O\big(|\T|k^{L} +  L^\prime nf^2\big)$\\
\APPNP & - & $O\big(Lmf + L^\prime nf^2\big)$  \\
\GBP & $O\big(\frac{Lnf\sqrt{L|\T|\log (nL)m/n}}{\varepsilon}\big)$ & $O\big(L^\prime |\T|f^2\big)$ \\
\AGP & $O\big(\frac{L^2nf}{\varepsilon}\big)$ & $O\big(L^\prime|\T|f^2\big)$ \\
\NDLS & $O\big(Lmf\big)$ & $O\big(L^\prime|\T|f^2\big)$ \\
\Grandp & $O\big(\frac{Ln}{\varepsilon}\big)$ & $O\big(knf+L^\prime knf^2\big)$ \\
\gtro & $O\big(\frac{L|\T|f}{\varepsilon}\log\frac{1}{\delta \varepsilon}\big)$ & $O\big(L^\prime|\T|f^2\big)$ \\ \bottomrule
\end{tabular}
\end{table}

\Grandp~\cite{FengDHYCK022} estimates the propagation matrix $\Pi$ during preprocessing with error bounded by $r_\textrm{max}$ at the cost of $O(\frac{(|\U^\prime|+|\T|)L}{r_\textrm{max}})$ where $\U^\prime$ is a sample set of unlabeled node set $\U=\V\setminus \T$. To yield accurate estimations for $\varepsilon$-importance neighbors, $r_\textrm{max}$ is set $r_\textrm{max}=\Theta(\varepsilon)$ and the resulting time complexity of prepossessing is $O(\frac{Ln}{\varepsilon})$. According to~\cite{FengDHYCK022}, its training complexity is $O(knf+L^\prime knf^2)$.

\AGP~\cite{WangHW000W21} provides an accurate estimation for node $u$ if $\pi(u) > \epsilon^\prime$ for each dimension of feature matrix $\X$, denoted as $\x=\X[\cdot, i]$ for $i\in \{0,\ldots,f-1\}$ where $\|\x\|_1 \le 1$, $\pi=\sum^L_{\ell=0}w^{(\ell)}(\D^{-a}\A\D^{-b})^\ell\x$, and $\epsilon^\prime$ is an input threshold. As proved in~\cite{WangHW000W21}, the time cost of \AGP on one feature dimension is $O(\frac{L^2}{\epsilon^\prime}\sum^L_{\ell=0}\|(\sum^\infty_{i=\ell}w^{(i)}) (\D^{-a}\A\D^{-b})^\ell\x \|_1)$. 
We consider a simple case that $a=0$ and $b=1$. Suppose that $\|\x\|_1 =\frac{1}{C}$ for some constant $C\ge 1$, and thus we have \[\left\|\big(\sum^\infty_{i=\ell}w^{(i)}\big) (\A\D^{-1})^\ell\x \right\|_1 =  \sum^\infty_{i=\ell}w^{(i)} \big\|(\A\D^{-1})^\ell \x \big\|_1  = \frac{1}{C}\sum^\infty_{i=\ell}w^{(i)}.\] Since $\sum^L_{\ell=0}\sum^\infty_{i=\ell}w^{(i)} \ge 1$, the time complexity of \AGP is at least $\Theta(\frac{L^2}{C\epsilon^\prime})$. To ensure nodes aggregating at least one $\varepsilon$-importance neighbor $v$ are estimated accurately, $\varepsilon{\x(v)} = \Omega(\epsilon^\prime)$ is required. Since $\|\x\|_1 = \frac{1}{C}$ for some constant $C$ and there are $n$ nodes, it is reasonably to assume that $\x(v)=O(\frac{1}{n})$. Therefore, $\epsilon^\prime = O(\frac{\varepsilon}{n})$. In this regard, the time cost of \AGP to capture $\varepsilon$-importance neighbors for all $f$ dimensions is $O(\frac{L^2nf}{\varepsilon})$.

\GBP~\cite{ChenWDL00W20} derives representations for the $i$-th dimension as $\pi=\sum^L_{\ell=0}w^{(\ell)}\D^{\gamma}(\D^{-1}\A)^\ell \D^{-\gamma} \cdot \x$, where $\x=\X[\cdot, i]$ for $i\in \{0,\ldots,f-1\}$ and $\X$ is the feature matrix with $f$ dimensions. \GBP ensures that the estimation $\hat{\pi}(u)$ of $\pi(u)$ for any $u\in \T$ is within an error of $d^\gamma_u \epsilon^\prime$, \ie $|\pi(u)-\hat{\pi}(u)| \le d^\gamma_u \epsilon^\prime$, where the factor $d^\gamma_u$ is due to the term $\D^{\gamma}$ and does not impact sampling errors. The final time complexity of \GBP is $\frac{Lf\sqrt{L|\T|\log (nL)m/n}}{\epsilon^\prime}$. As discussed above, we have $\varepsilon \x(u) =\Omega(\epsilon^\prime)$ and $\x(u)=O(\frac{1}{n})$, which indicates that $\epsilon^\prime = O(\frac{\varepsilon}{n})$. Consequently, the time cost of \GBP to capture $\varepsilon$-importance neighbors is $O\big(\frac{Lnf\sqrt{L|\T|\log (nL)m/n}}{\varepsilon}\big)$.

For the rest models in Table~\ref{tbl:sampling}, we borrow the time complexity from their official analyses since they either provide no sampling approximation guarantee or consider all neighbors without explicit sampling. As analyzed, time complexities of state of the art are linear in the size of the graph, while that of \gtro is linear in the size of the target set $\T$. In semi-supervised classification with limited labels, we have $|\T| \ll n$, which confirms the theoretical efficiency superiority of \gtro.

\spara{Parallelism} \gtro derives the representation of every target node {\em independently} and does not rely on any intermediate representations of other nodes. This design makes \gtro inherently parallelizable so as to be a promising solution to derive node representations for massive graphs since they can process all nodes simultaneously. Further, this enables \gtro scalable for supervised learning as well.

\vspace{-1mm}
\section{Experiment}\label{sec:exp}
In this section, we evaluate the performance of \gtro for semi-supervised classification in terms of effectiveness (micro F1-scores) and efficiency (running times).

\subsection{Experimental Setting}\label{sec:setting}
\vspace{-1.5mm}
\spara{Datasets} We use seven publicly available datasets across various sizes in our experiments. Specifically, we conduct {\em transductive learning} on the four citation networks, including three small citation networks~\cite{sen2008collective} Cora, Citeseer, and Pubmed, and a web-scale citation network Papers100M~\cite{hu2020open}. We run {\em inductive learning} on three large datasets, \ie citation network Ogbn-arxiv~\cite{hu2020open}, social network Reddit~\cite{ZengZSKP20}, and co-purchasing network Amazon~\cite{ZengZSKP20}. Table~\ref{tbl:dataset} in Appendix~\ref{app:setting} summarizes the statistics of those datasets. Among them, Papers100M is the largest dataset ever tested in the literature.

For semi-supervised classification with limited labels, we randomly sample $20$ nodes per class for training, $500$ nodes for validation, and $1000$ nodes for testing. For each dataset, we randomly generate $10$ instances and report the average performance of each tested method.

\spara{Baselines} For transductive learning, we evaluate \gtro against $13$ baselines. We categorize them into three types,  \ie (i) $2$ {\em coupled} GNN methods \GCN~\cite{KipfW17} and \GAT~\cite{VelickovicCCRLB18}, (ii) $3$ {\em sampling-based} methods \GraphSAGE~\cite{HamiltonYL17}, \GraphSAINT~\cite{ZengZSKP20}, and \ShaDowGCN~\cite{ZengZXSMKPJC21}, and (iii) $8$ {\em decoupled} GNN methods \SGC~\cite{WuSZFYW19}, \APPNP~\cite{KlicperaBG19}, and its improvement \PPRGo~\cite{BojchevskiKPKBR20}, \GDC~\cite{KlicperaWG19}, \GBP~\cite{ChenWDL00W20}, \AGP~\cite{WangHW000W21}, and two recently proposed \NDLS~\cite{ZhangYSLOTYC21}, and \Grandp~\cite{FengDHYCK022}. 

For inductive learning, we compare \gtro with $7$ baselines. Among the $13$ methods tested in transductive learning, $7$ of them are not suitable for semi-supervised inductive learning and thus are omitted, as explained in Section~\ref{app:setting}. In addition, we include an extra method \FastGCN~\cite{ChenMX18} designed for inductive learning. Details for the implementations are provided in Appendix~\ref{app:setting}.

\spara{Parameter Settings} For \gtro, we fix $\eta=2$, $\delta=0.01$ and tune the four hyperparameters $\tau^\prime$, $\omega$, $\rho$, and $\varepsilon$. Appendix~\ref{app:setting} provides the principal on how they are tuned and values selected for all datasets. As with baselines, we either adopt their suggested parameter settings or tune the parameters following the same principle as \gtro to reach their best possible performance.

All methods are evaluated in terms of {\em micro F1-scores} on node classification and {\em running times} including preprocessing times (if applicable) and training times. One method is omitted on certain datasets if it (i) is not suitable for inductive semi-supervised learning or (ii) runs out of memory (OOM), either GPU memory or RAM.

\begin{table}[!t]
 \caption{F1-score (\%) of transductive learning.} \label{tbl:transductive}
 \setlength{\tabcolsep}{0.2em} 
 \small
\begin{tabular}{@{}clcccc@{}}
\toprule
 & {Methods} & {Cora} & {Citeseer} & {Pubmed} & {Papers100M}  \\ \midrule 
\multirow{2}{*}{\rotatebox[origin=c]{90}{Coup*}}
& \GCN	    	&		78.91 $\pm$ 1.87			&		69.11 $\pm$ 1.46			&		78.05 $\pm$ 1.64			&	OOM	\\
& \GAT            & 80.22 $\pm$ 1.48                  & 69.35 $\pm$ 0.93                  & 78.82 $\pm$ 1.90 & OOM \\ \midrule
\multirow{3}{*}{\rotatebox[origin=c]{90}{Sampling}}
& \GraphSAGE      &  76.67 $\pm$ 2.21                 & 67.41 $\pm$ 1.77                  & 76.92 $\pm$ 2.84 &  OOM \\
& \GraphSAINT     &  74.76 $\pm$ 2.86                 & 67.51 $\pm$ 4.76                  & 78.65 $\pm$ 4.17 & OOM \\
& \ShaDowGCN      & 73.83 $\pm$ 2.46                  &  63.54 $\pm$ 1.11                 & 71.79 $\pm$ 2.92 & OOM \\ \midrule
\multirow{9}{*}{\rotatebox[origin=c]{90}{Decoupled}}
& \SGC	    	& 76.79 $\pm$ 1.82	&	70.49 $\pm$ 1.29	&	74.11 $\pm$ 2.55			&	\underline{48.59 $\pm$ 1.77}	\\
& \APPNP		&	81.16 $\pm$ 0.77			    &		69.83 $\pm$ 1.27			&	80.21 $\pm$ 1.79		&	OOM	\\
& \PPRGo		    &		79.01 $\pm$ 1.88			&		68.92 $\pm$  1.72			&		78.20 $\pm$ 1.96			&	OOM	\\
& \GDC	    	&		80.69 $\pm$  1.99			&	 69.69 $\pm$  1.42			    &		77.67 $\pm$  1.65			&	OOM	\\
& \GBP	    	&		81.17 $\pm$ 1.60			&		70.18 $\pm$  1.90			&		80.09 $\pm$  1.51	&	44.91 $\pm$  1.23	\\
& \AGP	    	&		77.70 $\pm$ 2.04			&	  67.15 $\pm$   2.04			&		78.97 $\pm$ 1.33	&	46.71 $\pm$  1.99	\\
& \NDLS       & 81.39 $\pm$ 1.55                      & 69.63 $\pm$ 1.69                  & \underline{80.38 $\pm$ 1.41} & OOM\\
& \Grandp		& \bf{83.48 $\pm$ 1.18}			   &	\bf{71.42 $\pm$ 1.89}	            &	79.18 $\pm$ 1.93		&	OOM	\\
& \gtro		&		\underline{82.13 $\pm$ 1.08}		 &		\underline{71.35  $\pm$ 0.82}		&		\bf{80.90 $\pm$ 2.02}		&	\bf{49.81 $\pm$  1.10}	\\ \bottomrule
\end{tabular}
\end{table}

\subsection{Performance Results}\label{sec:performance}

Table~\ref{tbl:transductive} and Table~\ref{tbl:inductive} present the averaged F1-scores associated with the standard deviations in {\em transductive learning} on Cora, Citeseer, Pubmed, and Papers100M and {\em inductive learning} on Ogbn-arxiv, Reddit, and Amazon respectively. For ease of demonstration, we highlight the {\em largest} score in bold and underline the {\em second largest} score for each dataset. 

Table~\ref{tbl:transductive} shows that \gtro achieves the highest F1-scores on datasets Pubmed and Papers100M and the second highest scores on Cora and Citeseer.  Meanwhile, \gtro obtains the largest F1-scores on the three datasets Ogbn-arxiv, Reddit, and Amazon, as displayed in Table~\ref{tbl:inductive}. In particular, the improvement margins over the second best on the three datasets are $0.93\%$, $0.78\%$, and $2.67\%$ respectively. These observations indicate that \gtro performs better on relatively large graphs. Intuitively, nodes in large graphs are prone to reside in various structure contexts and contain neighbors of mixed quality, and the \NDM diffusion model and the sampling techniques (important neighbor identification and selection) utilized by \gtro are able to take advantage of such node-wise characteristics.

The most competitive method, \Grandp achieves the best on datasets Cora and Citeseer. Nonetheless, as shown in Figure~\ref{fig:runtimesmall} ( Section~\ref{app:exp}), \Grandp runs significantly slower than \gtro does. For the three sampling-based methods, \ie \GraphSAGE, \GraphSAINT, and \ShaDowGCN, they acquire noticeably lower F1-scores than \gtro does. This is due to that they sample neighbors and nodes randomly without customizing the sampling strategy towards target nodes, as introduced in Section~\ref{sec:relatedwork}. Meanwhile, the clear performance improvement of \gtro over \GBP and \AGP clearly supports the superiority of our general heat diffusion function \GHD over the diffusion models used in \GBP and \AGP (i.e., PPR and HKPR), as well as the efficacy of our diffusion model \NDM.

Overall, it is crucial to consider the unique structure characteristic of each individual node in the design of both the diffusion model and neighbor sampling techniques for node classifications.

\begin{table}[!t]
 \caption{F1-score (\%) of inductive learning.} \label{tbl:inductive} 
 \small
\begin{tabular}{@{}lccc@{}}
\toprule
Methods & Ogbn-arxiv & Reddit & Amazon  \\ \midrule
\GraphSAGE  & 51.79 $\pm$ 2.16 & 89.16 $\pm$ 1.16 &  47.71 $\pm$ 1.07 \\
\FastGCN & \underline{56.45 $\pm$ 1.69} & 92.43 $\pm$ 1.00 & OOM \\
\SGC     	&		56.03 $\pm$ 1.96		&	 \underline{92.64 $\pm$ 1.02}		&	   41.32 $\pm$ 1.10		\\
\PPRGo   	&		52.12 $\pm$ 3.22		&		78.21 $\pm$  3.07		&		60.10 $\pm$  1.17		\\
\GBP     	&		54.01 $\pm$ 2.55		&		76.09 $\pm$  1.75		&		\underline{60.78 $\pm$  1.04}		\\
\AGP     	&	    55.89 $\pm$ 1.47		&		92.18 $\pm$ 0.88		&		55.72 $\pm$  1.68		\\
\NDLS       & 54.23 $\pm$ 2.49 & 85.25 $\pm$ 1.24 &  50.10 $\pm$ 2.09 \\
\gtro   	&		\bf{57.38 $\pm$ 1.31}		&		\bf{93.42 $\pm$ 0.48}	&		\bf{63.45 $\pm$  0.70}	\\\bottomrule
\end{tabular}
\end{table}


\def\st{0.83}
\def\gap{0.048}
\def\wid{1.7}

\begin{figure*}[!t]
\centering
\begin{small}
\begin{tikzpicture}
    \begin{customlegend}[
        legend entries={{\bf \GraphSAGE},{\bf \FastGCN},{\bf \SGC},{\bf \PPRGo},{\bf \GBP},{\bf \AGP},{\bf \NDLS},{\bf \gtro}},
        legend columns=8,
        area legend,
        legend style={at={(0.45,1.15)},anchor=north,draw=none,font=\footnotesize,column sep=0.15cm}]
        \addlegendimage{,fill=mygreen}
        \addlegendimage{,fill=blue!70}
        \addlegendimage{,pattern color=blue,pattern=crosshatch dots}
        \addlegendimage{,pattern color=blue,pattern=grid}
        \addlegendimage{,pattern color=myred,pattern=north west lines}
        \addlegendimage{,pattern color=magenta, pattern=crosshatch}
        \addlegendimage{,pattern color=red,pattern=horizontal lines}
        \addlegendimage{,fill=red}
    \end{customlegend}
\end{tikzpicture}\vspace{-2mm}
\subfloat[Ogbn-arxiv]{
\begin{tikzpicture}[scale=1]
\begin{axis}[
    height=\columnwidth/2.5,
    width=\columnwidth/\wid,
    ybar=1.5pt,
    bar width=0.35cm,
    enlarge x limits=true,
    ylabel={\em running time} (sec),
    xticklabel=\empty,
    ymin=0.1,
    ymax=100,
    ymode=log,
    ytick={0.1,1,10,100},
    log origin y=infty,
    log basis y={10},
    every axis y label/.style={at={(current axis.north west)},right=10mm,above=0mm},
    ]

\addplot [,fill=mygreen] coordinates {(1,1.9185) }; 
\addplot [,fill=blue!70] coordinates {(1,22.6958)}; 
\addplot [,pattern color=blue,pattern=crosshatch dots] coordinates {(1,0.7757)}; 
\addplot [,pattern color=blue,pattern=grid] coordinates {(1,43.3253)}; 
\addplot [,pattern color=myred,pattern=north west lines] coordinates {(1,31.16436)}; 
\addplot [,pattern color=magenta, pattern=crosshatch] coordinates {(1,1.107592)}; 
\addplot [,pattern color=red,pattern=horizontal lines] coordinates {(1,3.0274)}; 
\addplot [,fill=red] coordinates {(1,1.18926)}; 

\end{axis}
\end{tikzpicture}\vspace{-2mm}\hspace{0mm}\label{fig:arxivS4}%
}%
\subfloat[Reddit]{
\begin{tikzpicture}[scale=1]
\begin{axis}[
    height=\columnwidth/2.5,
    width=\columnwidth/\wid,
    ybar=1.5pt,
    bar width=0.35cm,
    enlarge x limits=true,
    ylabel={\em running time} (sec),
    xticklabel=\empty,
    ymin=1,
    ymax=1000,
    ymode=log,
    ytick={1,10,100,1000},
    log origin y=infty,
    log basis y={10},
    every axis y label/.style={at={(current axis.north west)},right=10mm,above=0mm},
    ]

\addplot [,fill=mygreen] coordinates {(1,34.5487) }; 
\addplot [,fill=blue!70] coordinates {(1,43.5498)}; 
\addplot [,pattern color=blue,pattern=crosshatch dots] coordinates {(1,10.1326)}; 
\addplot [,pattern color=blue,pattern=grid] coordinates {(1,321.13689)}; 
\addplot [,pattern color=myred,pattern=north west lines] coordinates {(1,307.4395)}; 
\addplot [,pattern color=magenta, pattern=crosshatch] coordinates {(1,8.90624)}; 
\addplot [,pattern color=red,pattern=horizontal lines] coordinates {(1,44.55)}; 
\addplot [,fill=,fill=red] coordinates {(1,2.163939)}; 

\end{axis}
\end{tikzpicture}\vspace{-2mm}\hspace{0mm}\label{fig:redditS4}%
}%
\subfloat[Amazon]{
\begin{tikzpicture}[scale=1]
\begin{axis}[
    height=\columnwidth/2.5,
    width=\columnwidth/\wid,
    ybar=1.5pt,
    bar width=0.35cm,
    enlarge x limits=true,
    ylabel={\em running time} (sec),
    xticklabel=\empty,
    ymin=1,
    ymax=100,
    ymode=log,
    ytick={1,10,100},
    log origin y=infty,
    log basis y={10},
    every axis y label/.style={at={(current axis.north west)},right=10mm,above=0mm},
    ]

\addplot [,fill=mygreen] coordinates {(1,49.4441) }; 
\addplot [,pattern color=blue,pattern=crosshatch dots] coordinates {(1,25.3175)}; 
\addplot [,pattern color=blue,pattern=grid] coordinates {(1,22.01036)}; 
\addplot [,pattern color=myred,pattern=north west lines] coordinates {(1,11.29867)}; 
\addplot [,pattern color=magenta, pattern=crosshatch] coordinates {(1,23.5255)}; 
\addplot [,pattern color=red,pattern=horizontal lines] coordinates {(1,44.55)}; 
\addplot [,fill=red] coordinates {(1,1.2695)}; 

\end{axis}
\end{tikzpicture}\vspace{-2mm}\hspace{0mm}\label{fig:amazonS4}%
}%
\subfloat[Papers100M]{
\begin{tikzpicture}[scale=1]
\begin{axis}[
    height=\columnwidth/2.5,
    width=\columnwidth/1.9,
    ybar=1.5pt,
    bar width=0.4cm,
    enlarge x limits=true,
    ylabel={\em running time} (sec),
    xticklabel=\empty,
    ymin=1,
    ymax=10000,
    ymode=log,
    ytick={1,10,100,1000,10000},
    log origin y=infty,
    log basis y={10},
    every axis y label/.style={at={(current axis.north west)},right=10mm,above=0mm},
    ]

\addplot [,pattern color=blue,pattern=crosshatch dots] coordinates {(1,2355.808)}; 
\addplot [,pattern color=myred,pattern=north west lines] coordinates {(1,2441.7021)}; 
\addplot [,pattern color=magenta, pattern=crosshatch] coordinates {(1,2618.8565)}; 
\addplot [,fill=red] coordinates {(1,5.33461) }; 

\end{axis}
\end{tikzpicture}\vspace{-2mm}\hspace{0mm}\label{fig:papers100MS4}%
}%

\end{small}
\caption{Running times on large graphs (best viewed in color).} \label{fig:runtime}
\vspace{-1mm}
\end{figure*}
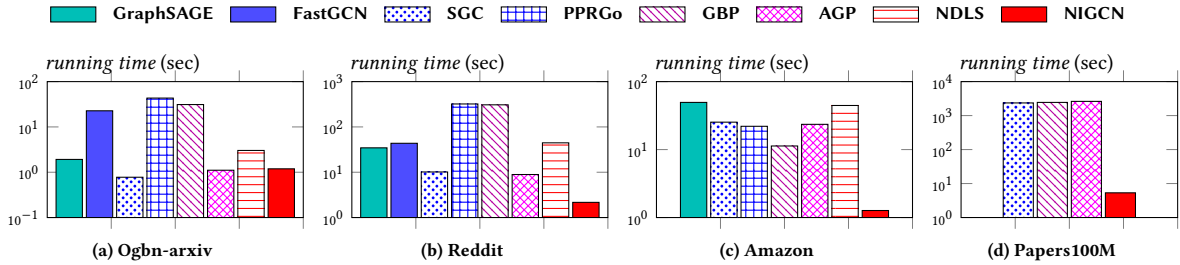

\subsection{Scalability Evaluation on Large Graphs}\label{sec:efficiency}
 
In this section, we evaluate the scalability of tested methods by comparing their running times on the four large datasets, Ogbn-arxiv, Reddit, Amazon, and Papers100M. In particular, the running times include preprocessing times (if applicable) and training times. For a comprehensive evaluation, we also report the corresponding running times on the three small datasets Cora, Citeseer, and Pubmed in Appendix~\ref{app:exp}.

As shown in Figure~\ref{fig:runtime}, \gtro ranks third with negligible lags on dataset Ogbn-arxiv and dominates other methods noticeably on datasets Reddit, Amazon, and Papers100M. Meanwhile, its efficiency advantage expands larger as datasets grow. Specifically, on dataset Ogbn-arxiv, \gtro, \SGC, and \AGP all can finish running within $1$ second. The speedups of \gtro against the second best on datasets Reddit, Amazon, and Papers100M are up to $4.12\times$, $8.90\times$, and $441.61\times$ respectively. In particular, on the largest dataset Papers100M, \gtro is able to complete preprocessing and model training within $10$ seconds. The remarkable scalability of \gtro lies in that (i) \gtro only generates node representations for a small portion of labeled nodes involved in model training, and (ii) neighbor sampling techniques in \gtro significantly reduce the number of neighbors for each labeled node in feature aggregation, as detailed analyzed in Section~\ref{sec:instant}. This observation strongly supports the outstanding scalability of \gtro and its capability to handle web-scale graphs.

\subsection{Ablation Study}\label{sec:abla}

\begin{table}[!t]
 \caption{Performance of \gtro variants.} \label{tbl:ablation} 
 \small
\begin{tabular}{@{}lcccc@{}}
\toprule
{Variants} & {$\gtro_{\textsf{HKPR}}$} & {$\gtro_{\textsf{UDL}}$} & {$\gtro_{\textsf{NFK}}$} & {\gtro} \\ \midrule
F1-score (\%)  & 61.60 $\pm$ 0.82 & 61.32 $\pm$ 1.32 & 52.76 $\pm$ 1.14 & 63.45 $\pm$ 0.70\\	   
Disparity (\%) & -1.85		& -2.13		&	-10.69	 & 0\\ \bottomrule
\end{tabular}
\end{table}

\spara{Variants of \gtro} To validate the effects of diffusion model \NDM and the sampling techniques in \gtro, we design three variants of \gtro, \ie $\gtro_{\textsf{HKPR}}$, $\gtro_{\textsf{UDL}}$, and $\gtro_{\textsf{NFK}}$.
Specifically, (i) $\gtro_{\textsf{HKPR}}$ adopts heat kernel PageRank (HKPR) instead of general heat diffusion \GHD in \NDM as the diffusion function, (ii) $\gtro_{\textsf{UDL}}$ unifies the diffusion length for all labeled nodes in contrast with the node-wise diffusion length in \NDM, and (iii) $\gtro_{\textsf{NFK}}$ removes the first-$K$ limitation on the number of neighbors. 
We test all variants on Amazon and Table~\ref{tbl:ablation} reports the corresponding F1-scores. For clarification, we also present the F1-score disparity from that of \gtro.

First of all, we observe that the F1-score of $\gtro_{\textsf{HKPR}}$ is $1.04\%$ smaller than that of \gtro. This verifies that HKPR is not capable of capturing the structure characteristics of Amazon and \NDM offers better generality. Second, $\gtro_{\textsf{UDL}}$ acquires  $1.32\%$ less F1-scores compared with \gtro. This suggests that diffusion with customized length leverages the individual structure property of each target node, which benefits the node classification. Last,  $\gtro_{\textsf{NFK}}$ achieves $9.88\%$ smaller F1-score than \gtro does, which reveals the potential noise signals from neighbors and recognizes the importance and necessity of important neighbor selection.

\begin{table}[!t]
 \caption{F1-score (\%) vs.\ label percentages.} \label{tbl:varyinglabel} 
 \small
 \resizebox{0.48\textwidth}{!}{%
\begin{tabular}{@{}lccccc@{}}
\toprule
Methods & 4\textperthousand & 8\textperthousand & 1\% & 2\% & 5\% \\ \midrule
\PPRGo   &	63.68 $\pm$ 1.13	&	66.02 $\pm$ 0.89	&	66.62 $\pm$  0.86	&	67.54 $\pm$ 0.67    &	OOM \\
\GBP     &	\underline{64.57 $\pm$ 0.66}	&	\underline{68.31 $\pm$ 0.86}	&	\underline{69.22 $\pm$  0.71}	&	\underline{71.56 $\pm$ 0.54}	&	\underline{73.57 $\pm$ 0.45}    \\
\gtro    &	\bf{67.24 $\pm$ 0.60}	&	\bf{70.93 $\pm$ 0.75}	&	\bf{71.49 $\pm$  0.91}	&	\bf{73.19 $\pm$ 0.52}	&	\bf{74.02 $\pm$ 0.31} \\ \bottomrule
\end{tabular}}
\end{table}

\spara{Label Percentages Varying} To evaluate the robustness of \gtro towards the portion of labeled nodes, we test \gtro by varying the label percentages in $\{4\text{\textperthousand}, 8\text{\textperthousand}, 1\%, 2\%, 5\%\}$ on Amazon\footnote{Papers100M contains only $1.4\%$ labeled nodes which is insufficient for testing.} and compare it with two competitive baselines \PPRGo and \GBP. Results in Table~\ref{tbl:varyinglabel} and Figure~\ref{fig:varyinglabel} report the F1-scores and running times respectively.

As displayed in table~\ref{tbl:varyinglabel}, \gtro achieves the highest F1-score with average $1.93\%$ advantages over the second highest scores across tested percentage ranges. Moreover, Figure~\ref{fig:varyinglabel} shows that \gtro notably dominates the other two competitors in efficiency. In particular, \gtro completes execution within $3$ seconds and runs up to $5\times$ to $20\times$ faster than \GBP and \PPRGo in {\em all} settings respectively. These findings validate the robustness and outstanding performance of \gtro for semi-supervised classification. 

\spara{Parameter Analysis} The performance gap between $\gtro_{\textsf{HKPR}}$ and \gtro has shown that inappropriate combination of $\omega$ and $\rho$ degrades the performance significantly. Here we test the effects of hyperparameters $\tau$ and $\varepsilon$ in control of the diffusion length $\ell_u$ and the number of neighbors, respectively.

On Amazon, we select $\tau=1.5$, denoted as $\tau_o$ and $\varepsilon=0.05$, denoted as $\varepsilon_o$. We then test $\tau \in \{0.25\tau_o, 0.5\tau_o, 2\tau_o, 4\tau_o\}$ and $\varepsilon \in \{0.25\varepsilon_o, 0.5\varepsilon_o, 2\varepsilon_o, 4\varepsilon_o\}$ and plot the results in Figure~\ref{fig:varyingtaueps}. As shown, F1-score  improves along with the increase of $\tau$ until $\tau=\tau_o$ and then decreases slightly as expected. Similar patterns are also observed in the case of $\varepsilon$. Specifically, \gtro exhibits more sensitivity towards the change of $\tau$ than that of $\varepsilon$. This is because \gtro is able to capture the most important neighbors within the right $\tau$-distance with high probability when changing the threshold of $\varepsilon$-importance neighbors, which, however, is not guaranteed when altering the bound of $\tau$-distance.

\begin{figure}
\centering
\begin{minipage}{0.5\linewidth}
\centering
\begin{small}
\begin{tikzpicture}[scale=1,every mark/.append style={mark size=1.5pt}]
    \begin{axis}[
        height=\columnwidth/2.5*2,
        width=\columnwidth/2*2.3,
        ylabel={\em running time} (sec),
        xmin=0.5, xmax=5.5,
        ymin=1, ymax=10000,
        xtick={1,2,3,4,5},
        xticklabel style = {font=\footnotesize},
        yticklabel style = {font=\footnotesize},
        xticklabels={4\text{\textperthousand}, 8\text{\textperthousand}, 1\%, 2\%, 5\%},
        ymode=log,
        log basis y={10},
        every axis y label/.style={font=\footnotesize,at={(current axis.north west)},right=9mm,above=0mm},
        legend columns=2,
        legend style={fill=none,font=\scriptsize,at={(0.5,0.75)},anchor=center,draw=none},
    ] 
    \addplot[line width=0.2mm,mark=otimes,color=orange]
    plot coordinates {
    (1,	35.87547)
    (2,	80.36992)
    (3,	98.21063)
    (4,	207.08941)
    }; 
    \addplot[line width=0.2mm,mark=triangle,color=blue]
    plot coordinates {
    (1,	10.15739)
    (2,	11.36285)
    (3,	11.02694)
    (4,	12.12547)
    (5,	12.87104)
    }; 
    \addplot[line width=0.2mm,mark=o,color=red]
    plot coordinates {
    (1,	1.2795)
    (2,	1.843068)
    (3,	1.833078)
    (4,	2.136608)
    (5,	2.56482)
    }; 
    \legend{\PPRGo, \GBP, \gtro}
    \end{axis}
\end{tikzpicture}
\end{small}
\captionof{figure}{Running time.} \label{fig:varyinglabel}
\end{minipage}%
\begin{minipage}{0.5\linewidth}
\centering
\begin{small}
\begin{tikzpicture}[scale=1,every mark/.append style={mark size=1.5pt}]
    \begin{axis}[
        height=\columnwidth/2.5*2,
        width=\columnwidth/2*2.3,
        ylabel={\em F1-scores} (\%),
        xmin=0.5, xmax=5.5,
        ymin=55, ymax=74,
        xtick={1,2,3,4,5},
        xticklabel style = {font=\footnotesize},
        yticklabel style = {font=\footnotesize},
        xticklabels={0.25,0.5,1,2,4},
        every axis y label/.style={font=\footnotesize,at={(current axis.north west)},right=9mm,above=0mm},
        legend columns=1,
        legend style={fill=none,font=\scriptsize,at={(0.35,0.75)},anchor=center,draw=none},
    ] 
    \addplot[line width=0.2mm,mark=triangle,color=blue]
    plot coordinates {
    (1,	56.94)
    (2,	61.89)
    (3,	63.45)
    (4,	63.40)
    (5,	63.37)
    }; 
    \addplot[line width=0.2mm,mark=o,color=red]
    plot coordinates {
    (1,	61.42)
    (2,	62.46)
    (3,	63.45)
    (4,	63.36)
    (5,	63.29)
    }; 
    \legend{\gtro ($\tau$), \gtro ($\varepsilon$)}
    \end{axis}
\end{tikzpicture}
\end{small}
\captionof{figure}{Performance.} \label{fig:varyingtaueps}
\end{minipage}
\end{figure}

\section{Conclusion}\label{sec:conclusion}

In this paper, we propose \gtro, a scalable graph neural network built upon the node-wise diffusion model \NDM, which achieves orders of magnitude speedups over representative baselines on massive graphs and offers the highest F1-score on semi-supervised classification. 
In particular, \NDM (i) utilizes the individual topological characteristic and yields a unique diffusion scheme for each target node and (ii) adopts a general heat diffusion function \GHD that adapts well to various graphs. Meanwhile, to optimize the efficiency of feature aggregations, \gtro computes representations for target nodes only and leverages advanced neighbor sampling techniques to identify and select important neighbors, which not only improves the performance but also boosts the efficiency significantly. Extensive experimental results strongly support the state-of-the-art performance of \gtro for semi-supervised classification and the remarkable scalability of \gtro.

		\bibliographystyle{ACM-Reference-Format}
		\bibliography{ref}
		
		\clearpage
		\appendix
		\def\st{0.82}
\def\gap{0.0276}
\begin{figure*}[!t]
\centering
\begin{small}
\begin{tikzpicture}
    \begin{customlegend}[
        legend entries={{\bf \GCN},{\bf \GAT},{\bf \GraphSAGE},{\bf \GraphSAINT},{\bf \ShaDowGCN},{\bf \SGC},{\bf \APPNP}},
        legend columns=7,
        area legend,
        legend style={at={(0.45,1.15)},anchor=north,draw=none,font=\footnotesize,column sep=0.15cm}]
        \addlegendimage{,fill=white}
        \addlegendimage{,fill=myorange}
        \addlegendimage{,fill=mygreen}
        \addlegendimage{,fill=cyan}
        \addlegendimage{,fill=blue!70}
        \addlegendimage{,pattern color=blue,pattern=crosshatch dots}
        \addlegendimage{,pattern color=blue!70,pattern=crosshatch}
    \end{customlegend}
    \begin{customlegend}[
        legend entries={{\bf \PPRGo},{\bf \GDC},{\bf \GBP},{\bf \AGP},{\bf \NDLS},{\bf \Grandp},{\bf \gtro}},
        legend columns=7,
        area legend,
        legend style={at={(0.45,0.76)},anchor=north,draw=none,font=\footnotesize,column sep=0.15cm}]
        \addlegendimage{,pattern color=blue,pattern=grid}
        \addlegendimage{,pattern color=myredd,pattern=north east lines}
        \addlegendimage{,pattern color=myred,pattern=north west lines}
        \addlegendimage{,pattern color=magenta, pattern=crosshatch}
        \addlegendimage{,pattern color=red,pattern=horizontal lines}
        \addlegendimage{,pattern color=red,pattern=grid}
        \addlegendimage{,fill=red}
    \end{customlegend}
\end{tikzpicture}\vspace{-2mm}
\subfloat[Cora]{
\begin{tikzpicture}[scale=1]
\begin{axis}[
    height=\columnwidth/2,
    width=\columnwidth/1.3,
    ybar=1.5pt,
    bar width=0.29cm,
    enlarge x limits=true,
    ylabel={\em running time} (sec),
    xticklabel=\empty,
    ymin=0.1,
    ymax=100,
    ymode=log,
    ytick={0.1,1,10,100},
    log origin y=infty,
    log basis y={10},
    every axis y label/.style={at={(current axis.north west)},right=10mm,above=0mm},
    legend style={at={(0.02,0.98)},anchor=north west,cells={anchor=west},font=\tiny}]
\addplot [,fill=white] coordinates {(1,0.6122)}; 
\addplot [,fill=myorange] coordinates {(1,0.9842)}; 
\addplot [,fill=mygreen] coordinates {(1,8.3396)}; 
\addplot [,fill=cyan] coordinates {(1,16.6031)}; 
\addplot [,fill=blue!70] coordinates {(1,9.1415)}; 
\addplot [,pattern color=blue,pattern=crosshatch dots] coordinates {(1,0.6941)}; 
\addplot [,pattern color=blue!70,pattern=crosshatch] coordinates {(1,1.2862)}; 
\addplot [,pattern color=blue,pattern=grid] coordinates {(1,10.4069)}; 
\addplot [,pattern color=myredd,pattern=north east lines] coordinates {(1,2.9002)}; 
\addplot [,pattern color=myred,pattern=north west lines] coordinates {(1,0.696)}; 
\addplot [,pattern color=magenta, pattern=crosshatch] coordinates {(1,0.519)}; 
\addplot [,pattern color=red,pattern=horizontal lines] coordinates {(1,6.4794)}; 
\addplot [,pattern color=red,pattern=grid] coordinates {(1,45.316)};
\addplot [,fill=red] coordinates {(1,3.1804)};

\end{axis}
\end{tikzpicture}\vspace{-2mm}\hspace{0mm}\label{fig:arxivtime}%
}%
\hspace{-1mm}
\subfloat[Citeseer]{
\begin{tikzpicture}[scale=1]
\begin{axis}[
    height=\columnwidth/2,
    width=\columnwidth/1.3,
    ybar=1.5pt,
    bar width=0.29cm,
    enlarge x limits=true,
    ylabel={\em running time} (sec),
    xticklabel=\empty,
    ymin=0.1,
    ymax=100,
    ymode=log,
    ytick={0.1,1,10,100},
    log origin y=infty,
    log basis y={10},
    every axis y label/.style={at={(current axis.north west)},right=10mm,above=0mm},
    ]

\addplot [,fill=white] coordinates {(1,0.6076)}; 
\addplot [,fill=myorange] coordinates {(1,1.3952)}; 
\addplot [,fill=mygreen] coordinates {(1,14.2971)}; 
\addplot [,fill=cyan] coordinates {(1,25.1651)}; 
\addplot [,fill=blue!70] coordinates {(1,9.117)}; 
\addplot [,pattern color=blue,pattern=crosshatch dots] coordinates {(1,2.7186)}; 
\addplot [,pattern color=blue!70,pattern=crosshatch] coordinates {(1,1.2527)}; 
\addplot [,pattern color=blue,pattern=grid] coordinates {(1,4.362)}; 
\addplot [,pattern color=myredd,pattern=north east lines] coordinates {(1,3.9888)}; 
\addplot [,pattern color=myred,pattern=north west lines] coordinates {(1,1.3139)}; 
\addplot [,pattern color=magenta, pattern=crosshatch] coordinates {(1,2.824827)}; 
\addplot [,pattern color=red,pattern=horizontal lines] coordinates {(1,19.9362)}; 
\addplot [,pattern color=red,pattern=grid] coordinates {(1,64.68)};
\addplot [,fill=red] coordinates {(1,3.3589)};

\end{axis}
\end{tikzpicture}\vspace{-2mm}\hspace{0mm}\label{fig:amazontime}%
}%
\hspace{-1mm}
\subfloat[Pubmed]{
\begin{tikzpicture}[scale=1]
\begin{axis}[
    height=\columnwidth/2,
    width=\columnwidth/1.3,
    ybar=1.5pt,
    bar width=0.29cm,
    enlarge x limits=true,
    ylabel={\em running time} (sec),
    xticklabel=\empty,
    ymin=0.1,
    ymax=100,
    ymode=log,
    ytick={0.1,1,10,100},
    log origin y=infty,
    log basis y={10},
    every axis y label/.style={at={(current axis.north west)},right=10mm,above=0mm},
    ]

\addplot [,fill=white] coordinates {(1,1.9173)}; 
\addplot [,fill=myorange] coordinates {(1,6.8416)}; 
\addplot [,fill=mygreen] coordinates {(1,3.5329)}; 
\addplot [,fill=cyan] coordinates {(1,23.0627)}; 
\addplot [,fill=blue!70] coordinates {(1,22.1788)}; 
\addplot [,pattern color=blue,pattern=crosshatch dots] coordinates {(1,2.6265)}; 
\addplot [,pattern color=blue!70,pattern=crosshatch] coordinates {(1,1.3135)}; 
\addplot [,pattern color=blue,pattern=grid] coordinates {(1,6.4469)}; 
\addplot [,pattern color=myredd,pattern=north east lines] coordinates {(1,25.0988)}; 
\addplot [,pattern color=myred,pattern=north west lines] coordinates {(1,1.266)}; 
\addplot [,pattern color=magenta, pattern=crosshatch] coordinates {(1,0.688986)}; 
\addplot [,pattern color=red,pattern=horizontal lines] coordinates {(1,24.5738)}; 
\addplot [,pattern color=red,pattern=grid] coordinates {(1,20.7)};
\addplot [,fill=red] coordinates {(1,2.6189)};

\end{axis}
\end{tikzpicture}\vspace{-2mm}\hspace{0mm}\label{fig:paperstime}%
}%
\end{small}
\caption{Running times on small graphs (best viewed in color).} \label{fig:runtimesmall}
\vspace{-2mm}
\end{figure*}
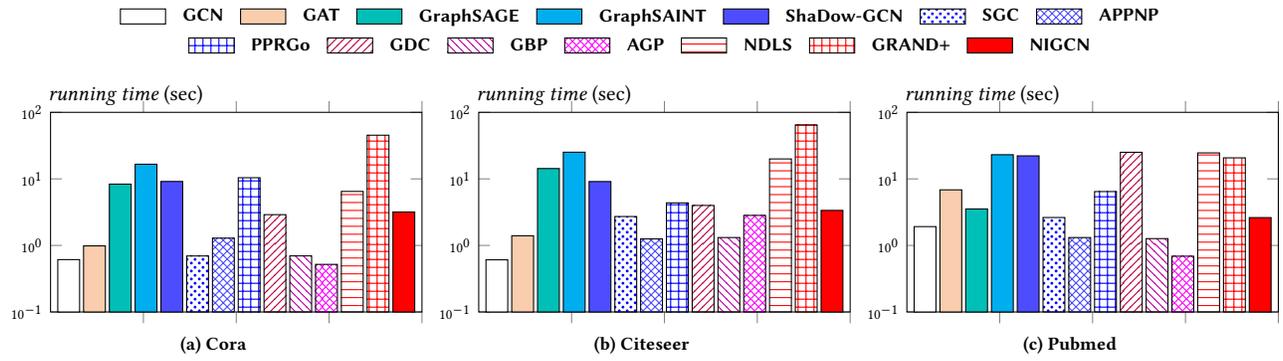

\section{Appendix}\label{sec:appendix}

\subsection{Proofs}\label{app:proof}
 
\begin{proof}[\bf Proof of Lemma~\ref{lem:theta}]
Before the proof, we first introduce {\it Chernoff Bound}~\cite{ChungL06} as follows.

\begin{lemma}[Chernoff Bound~\cite{ChungL06}]
Let $X_i$ be an independent variable such that for each $1\le i\le \theta$, $X_i \in[0,1]$. Let $X=\frac{1}{\theta}\sum^\theta_i X_i$. Given $\zeta \in (0,1)$, we have
\begin{equation}\label{eqn:hoeffding}
\Pr[{\mathbb E}[X] -X \ge \zeta]\le \e^{-\theta\zeta^2/2\mathbb{E}[X]}.  
\end{equation}
\end{lemma}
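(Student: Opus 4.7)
The plan is to establish this multiplicative lower-tail Chernoff bound via the standard exponential-moment (Chernoff--Cram\'er) method. Writing $S = \sum_{i=1}^{\theta} X_i$ and $\mu = \mathbb{E}[X]$ so that $\mathbb{E}[S] = \theta\mu$, the inequality to prove becomes $\Pr[S \le \theta(\mu-\zeta)] \le \e^{-\theta\zeta^2/(2\mu)}$. I would proceed in four standard steps, with only the final scalar inequality requiring any real care.

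First, apply Markov's inequality to $\e^{-tS}$ for an arbitrary $t>0$: since the event $\{S \le \theta(\mu-\zeta)\}$ coincides with $\{\e^{-tS} \ge \e^{-t\theta(\mu-\zeta)}\}$, one gets $\Pr[S \le \theta(\mu-\zeta)] \le \e^{t\theta(\mu-\zeta)}\cdot\mathbb{E}[\e^{-tS}]$. Second, exploit independence to factor $\mathbb{E}[\e^{-tS}] = \prod_{i} \mathbb{E}[\e^{-tX_i}]$, and bound each factor. Since $x \mapsto \e^{-tx}$ is convex and $X_i \in [0,1]$, linear interpolation at the endpoints yields $\e^{-tX_i} \le 1 - X_i(1-\e^{-t})$; taking expectations and using $1+y \le \e^{y}$ produces $\mathbb{E}[\e^{-tX_i}] \le \e^{\mathbb{E}[X_i](\e^{-t}-1)}$, hence $\mathbb{E}[\e^{-tS}] \le \e^{\theta\mu(\e^{-t}-1)}$.

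Third, optimize over the free parameter $t$ by choosing $\e^{-t} = 1 - \zeta/\mu$ (the minimizer of the resulting log-bound, and valid since $\zeta<\mu$ is the only nontrivial regime). Combining the two previous estimates gives
$$\Pr[S \le \theta(\mu-\zeta)] \le \left(\frac{\e^{-\zeta/\mu}}{(1-\zeta/\mu)^{1-\zeta/\mu}}\right)^{\theta\mu}.$$

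The hard part is step four: reducing this Kullback--Leibler-type expression to the clean Gaussian-looking form $\e^{-\theta\zeta^2/(2\mu)}$. Setting $\epsilon := \zeta/\mu \in [0,1)$, this reduces to the scalar inequality $(1-\epsilon)\ln(1-\epsilon) \ge -\epsilon + \epsilon^2/2$. I would prove it by defining $f(\epsilon) = (1-\epsilon)\ln(1-\epsilon) + \epsilon - \epsilon^2/2$, verifying $f(0) = f'(0) = 0$, and computing $f''(\epsilon) = \epsilon/(1-\epsilon) \ge 0$ on $[0,1)$; two successive integrations from $0$ then give $f \ge 0$, which converts the bracketed base above into $\e^{-\epsilon^2/2}$, raising to the $\theta\mu$-th power finishes the argument.
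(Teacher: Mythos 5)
Your proposal is correct. Note, however, that the paper does not prove this lemma at all --- it is imported verbatim as a known result from Chung and Lu --- so there is no in-paper argument to compare against; your exponential-moment derivation (Markov on $\e^{-tS}$, convexity to bound each $\mathbb{E}[\e^{-tX_i}]$ by $\e^{\mathbb{E}[X_i](\e^{-t}-1)}$, optimizing $\e^{-t}=1-\zeta/\mu$, and the scalar inequality $(1-\epsilon)\ln(1-\epsilon)\ge -\epsilon+\epsilon^2/2$ via $f''(\epsilon)=\epsilon/(1-\epsilon)\ge 0$) is the standard and sound proof of exactly this form of the lower-tail bound. The only loose end is the regime $\zeta\ge\mu$, which you correctly flag as trivial: there the event forces $X\le 0$, and since $X\ge 0$ it reduces to $\Pr[X=0]\le\prod_i(1-\mathbb{E}[X_i])\le \e^{-\theta\mu}\le \e^{-\theta\zeta^2/(2\mu)}$ when $\zeta\le\mu$, with the event being empty for $\zeta>\mu$; a one-line remark to that effect would make the argument fully self-contained.
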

Let $v$ be an $\varepsilon$-importance neighbor of $u$ in the $\ell$-th hop. Suppose $v$ has been visited $\theta^{(\ell)}_v$ times after $\theta=\frac{2\eta^2}{\varepsilon}\log(\frac{W_s}{\delta \varepsilon})$ WRWs are generated. Let $\hat{\phi}(\ell, u, v)=\frac{U(\omega, \rho, \ell)\theta^{(\ell)}_v}{\Theta}$ be the weight estimation of $\phi(\ell, u, v)$. As Chernoff indicates, we have
\begin{align*}
& \Pr\Big[\phi(\ell, u, v)-\hat{\phi}(\ell, u, v) \ge \frac{\phi(\ell, u, v)}{\eta}\Big] \\
& \le \e^{-\frac{\theta \phi(\ell, u, v)^2}{\eta^2}/(2\phi(\ell, u, v))} \le \e^{-\frac{\theta\varepsilon}{2\eta^2}} \\
& \le \e^{-\frac{2\eta^2}{\varepsilon}\log(\frac{W_s}{\delta \varepsilon}) \cdot \frac{\varepsilon}{2\eta^2}}  = \frac{\delta \varepsilon}{W_s}. 
\end{align*}
Thus $\hat{\phi}(\ell, u, v) \ge \frac{\phi(\ell, u, v)}{\eta}$ holds with probability at least $1-\frac{\delta \varepsilon}{W_s}$. Within $L$ hops, the total weight is \[\sum^L_{\ell=0}w^{(\ell)}\sum_{v\in \V} (\D^{-1}\A)^\ell[u,v]=\sum^L_{\ell=0}w^{(\ell)}=W_s.\] Thus, the total number of $\varepsilon$-importance neighbors of $u$ is bounded by $\frac{W_s}{\varepsilon}$. By union bound, the total failure probability within $L$ hops is no more than $\frac{\delta \varepsilon}{W_s}\frac{W_s}{\varepsilon}=\delta$, which completes the proof.
\end{proof}

\subsection{Experimental Settings}\label{app:setting}

\spara{Dataset Statistics} Table~\ref{tbl:dataset} presents the detailed statistics of the $7$ tested datasets.

\spara{Running Environment} All experiments are conducted on a Linux machine with an NVIDIA RTX2080 GPU (10.76GB memory), Intel Xeon(R) CPU (2.60GHz), and 377GB RAM. 

\spara{Implementation Details} By following the state of the art~\cite{ChenWDL00W20, WangHW000W21, FengDHYCK022}, we implement and \gtro in PyTorch and C++. The implementations of \GCN and \APPNP are obtained from Pytorch Geometric\footnote{\url{https://github.com/pyg-team/pytorch\_geometric}}, and the other five baselines are obtained from their official releases. 

\spara{Parameter Settings} We mainly tune $\omega$, $\rho$, $\tau$, and $\varepsilon$ for \gtro. Table~\ref{tbl:parameter} reports the parameter settings adopted for each dataset. According to our analysis in Section~\ref{sec:diffmatrixlen}, we tune $\omega$ in $[0.9,1.5]$ and $\rho$ in $[0.01,0.1]$ for sparse datasets, \ie Cora, Citeseer, and Pubmed to capture long-range dependency by a smooth expansion tendency; we tune $\omega$ in $[1,10]$ and $\rho$ in $[0.5,1.5]$ to provide refined HKPR-alike properties, \ie reaching peak at certain hop. For extremely dense dataset Amazon, we tune $\omega$ in $[0.8,1.2]$ and $\rho$ in $[1,1.5]$ to realize optimized PPR-alike properties, \ie exponentially decrease in short distance. For $\tau$ and $\varepsilon$, we search $\tau$ in $[0.5,2]$ to find the best scaling factor for each dataset and tune $\varepsilon$ in the range of $[0.005,0.05]$. As stated, we fix $\eta=2$ and $\delta=0.01$.

\begin{table}[h]
 \caption{Dataset details.} \label{tbl:dataset}
    \setlength{\tabcolsep}{0.2em}
 	\begin{tabular} {@{}lrrrr@{}}
		\toprule
		{\bf Dataset}  & \multicolumn{1}{c}{\bf{\#Nodes ($\boldsymbol{n}$)}} & \multicolumn{1}{c}{\bf{\#Edges ($\boldsymbol{m}$)}} & \multicolumn{1}{c}{\bf \#Features ($\boldsymbol{f}$)} & \multicolumn{1}{c}{\bf \#Classes} \\ \midrule 
		{Cora}            & 2,708      &  5,429       & 	1,433     & 7     \\ 
		{Citeseer}        & 3,327      &  4,732       & 	3,703     & 6     \\ 
		{Pubmed}          & 19,717    & 44,338        & 500          & 3     \\ 
		{Ogbn-arxiv}      & 169,343    &  1,166,243   & 	128       & 40  	        \\ 
		{Reddit}          & 232,965    & 114,615,892  &   602       & 41     \\ 
		{Amazon}          & 1,569,960  &  132,169,734  &   200       & 107      \\ 
		{Papers100M}      & 111,059,956    &  1,615,685,872   & 	128       & 172     \\ \bottomrule 
	\end{tabular}
\end{table}

\begin{table}[!t]
 \caption{Hyper-parameters of \gtro.} \label{tbl:parameter}
 \centering
\begin{tabular} {@{}lllllll@{}}
\toprule
{\bf Dataset}  & \multicolumn{1}{c}{\bf $\omega$} & \multicolumn{1}{c}{\bf $\rho$} & \multicolumn{1}{c}{\bf $\tau$} & \multicolumn{1}{c}{\bf $\varepsilon$} & \multicolumn{1}{c}{\bf $\eta$} & \multicolumn{1}{c}{\bf $\delta$} \\ \midrule 
{Cora}            & 1.15     & 0.06   & 1.7   & 0.02    & 2  & 0.01 \\ 
{Citeseer}        & 1.1      & 0.04   & 1.2   & 0.03    & 2  & 0.01     \\ 
{Pubmed}          & 1.15     & 0.1    & 1.9   & 0.01    & 2  & 0.01 \\ 
{Ogbn-arxiv}      & 9.0      & 0.85	  & 1.5   & 0.008   & 2  & 0.01 \\ 
{Reddit}          & 4.0      & 1.1    & 0.8   & 0.008   & 2  & 0.01    \\ 
{Amazon}          & 0.9      & 1.15   & 1.5   & 0.05    & 2  & 0.01     \\ 
{Papers100M}      & 7.4      & 1.3    & 0.6   & 0.01    & 2  & 0.01  \\ \bottomrule 
\end{tabular}
\end{table}

\spara{Baselines for Inductive Learning} As stated, several baselines tested in transductive learning are not suitable for semi-supervised inductive learning. For example, \GraphSAINT samples a random subgraph as the training graph and demands each node in the subgraph owns a label~\cite{ZengZSKP20}. Nonetheless, the percentages of labeled nodes are small due to semi-supervised setting on large graphs (Ogbn-arxiv, Reddit, and Amazon), which degrades the performance of \GraphSAINT notably. \Grandp needs to sample a subset test nodes for loss calculation during training, which, however, is conflict with the setting of inductive learning.

\subsection{Additional Experiments}\label{app:exp}

\figurename~\ref{fig:runtimesmall} presents the running times of the $13$ tested methods in transductive learning. As shown, the efficiency of the tested methods on the three small datasets varies and there is no clear winners. In particular, \AGP (resp.\ \GCN) outperforms other methods on Cora and pubmed (resp.\ Citeseer). However, the F1-scores of \AGP and \GCN fall behind those of other models with a clear performance gap, as shown in Table~\ref{tbl:transductive}. Recall that \Grandp achieves the highest F1-scores on Cora and Citeseer, and our model \gtro performs best on Pubmed. Nonetheless, \Grandp runs up to $10\times \sim 20\times$ slower than \gtro does.
	\end{sloppy}
	
\end{document}